\newtheorem{theorem}{Theorem}
\newtheorem{lemma}{Lemma}
\newcommand{\thickhline}{%
    \noalign {\ifnum 0=`}\fi \hrule height 1pt
    \futurelet \reserved@a \@xhline
}
\newcolumntype{"}{@{\hskip\tabcolsep\vrule width 1pt\hskip\tabcolsep}}
\begin{document}

\title{Performance Analysis of Fluid Antenna System Aided OTFS Satellite Communications
}

\author{Halvin~Yang,~\IEEEmembership{Member,~IEEE,}
            Mahsa Derakhshani,~\IEEEmembership{Senior Member,~IEEE,}
            Sangarapillai Lambotharan,~\IEEEmembership{Senior Member,~IEEE, }
            Lajos Hanzo,~\IEEEmembership{Life Fellow,~IEEE,}


}

\markboth{Submitted to IEEE Journal on Selected Areas in Communications, 2025}%
{6G Wireless System}

\maketitle

\begin{abstract}
Internet-of-Things (IoT) networks {typically} rely on satellite communications to provide coverage {in rural areas}. However, high-mobility satellite links introduce severe Doppler and delay spreads, which necessitate the use of orthogonal time frequency space (OTFS) modulation for reliable data transmission. {Furthermore, the space and energy constraints on IoT devices make the perfect use case for fluid antenna systems (FAS) due to their mechanical simplicity.} Hence, we propose a sophisticated FAS aided OTFS (FAS-OTFS) framework for satellite-based IoT networks. We derive analytical expressions for both the outage probability and ergodic capacity of FAS-OTFS under a general channel model, where the expressions derived are presented in integral form or as analytical bounds for efficient numerical evaluation. Additionally, we investigate a single-path fading scenario, where closed-form expressions are obtained. {Our} numerical results demonstrate significant performance gains in terms of both the outage probability and capacity compared to conventional OTFS systems, confirming the efficacy of FAS-OTFS in energy-constrained high-mobility environments. Our findings establish FAS-OTFS as a promising candidate for next-generation IoT communications over satellite links.
\end{abstract}

\begin{IEEEkeywords}
performance analysis, fluid antenna system, Orthogonal time frequency space modulation, LEO Satellite, Satellite communications, internet of things (IoT)
\end{IEEEkeywords}

\IEEEpeerreviewmaketitle

\section{Introduction}
\subsection{Background}
\IEEEPARstart{W}{ith} the proliferation of the Internet of Things (IoT) or other edge devices there is an increasing reliance on {their} global connectivity. However, the traditional terrestrial infrastructure is only able to cover 20-25\% of the world \cite{WorldBankOpenData}, hence may not be able to reach many IoT devices which are often located in rural environments. {As a remedy,} solutions like unmanned aerial vehicles (UAVs) or high altitude platforms (HAPs) have been proposed \cite{Alfattani2023}, but they {still} only provide limited coverage while having high operating expenses. {Hence they are} unsustainable {as} a permanent solution. A more promising area is the integration of terrestrial and satellite communication systems, especially low earth orbit (LEO) satellites, popularised {by} the recent success of Starlink. With {sufficient} LEO satellites in orbit, global coverage {may indeed} be provided \cite{Liu2021}. 

A significant challenge {in} non-terrestrial networks (NTNs) is the high mobility of satellites and their distance from the user \cite{Giordani2021}. This causes significant delays as well as Doppler shifts, which has to be addressed for reliable communications. Orthogonal time frequency space (OTFS) modulation \cite{Hadani2017} {constitutes} an effective solution to this problem. By operating in the delay-Doppler (DD) domain, OTFS {harmoniously harnesses} the underlying characteristics of the wireless channel, facilitating superior performance in terms of reliability and spectral efficiency. {Hence} OTFS has been shown to outperform traditional modulation schemes like OFDM \cite{Gaudio2022} in scenarios {having} severe Doppler shifts, making it an ideal candidate for satellite communications.

Both the computational power and energy are limited at many IoT or mobile devices, making the computationally intensive 2-D transform that OTFS requires challenging \cite{Xiao2022}. In order to increase both the computation and energy efficiency at the mobile device, we will demonstrate that the intrinsic integration of OTFS with fluid antenna systems (FAS) is beneficial. This is because FAS are capable of adaptively repositioning the antenna elements for maximizing the signal-to-noise ratio (SNR). {Hence they} have garnered attention as a means of mitigating the interference and improving the link reliability \cite{Wong2021}. A FAS is capable of achieving a performance comparable to multiple fixed antennas, while reducing both the energy consumption and space requirements, which are key advantages for resource-constrained mobile receivers. The energy efficiency of FAS also aligns with the increasing focus on sustainable wireless communication technologies \cite{Kumar2023}.

{Hence we intrinsically} integrate OTFS and FAS. Through the development of a suitable system model, its performance is analysed to demonstrate the benefits of this {beneficially amalgamated} system and to offer further insights.





\subsection{Related Work}

This transformative technique is capable of addressing the challenges of next generation (NG) wireless systems. We conclude with a comprehensive review of the relevant literature, focusing on the role of OTFS in high-mobility scenarios, on the computational and energy challenges of IoT and mobile devices, and on the potential of FAS as a disruptive technology. Additionally, the performance analysis of FAS is discussed, highlighting recent advances and their implications for integrated systems.

NG systems are envisioned to support massive connectivity, ultra-reliable low-latency communication (URLLC), and enhanced mobile broadband (eMBB) services \cite{Wang2023}. However, the traditional terrestrial infrastructure struggles to provide flawless connectivity in remote rural areas. While UAVs have been proposed as a solution, their long-term sustainability is limited by energy constraints and operational costs \cite{Shen2024}. Satellite communications (Satcom) have emerged as a viable {design} alternative, with an early channel model being proposed in \cite{Lutz1991}. This was then expanded upon by \cite{Abdi2003}, which included different environments, antennas and satellite elevations. However, in high-mobility scenarios, OFDM systems {falter in the face of} severe Doppler effects \cite{Kodheli2021}. OTFS, {which is} a DD domain modulation scheme first proposed in \cite{Hadani2017}, has been shown to outperform OFDM in such environments, making it a promising candidate for integrated satellite-terrestrial networks \cite{Gaudio2022}.

Previous analysis of OTFS shows excellent performance both in multiple access \cite{chen2025otfs} and massive MIMO \cite{mehrotra2024sparse}. Its integration with reflective intelligent surfaces (RIS) \cite{li2022hybridRIS}, NOMA \cite{ding2019otfsnoma} or even {in} combination with chirp waveforms {designed} for sensing \cite{zegrar2024otfschirp} show the versatility of using OTFS {along} with other emerging technologies. As a trade-off for its advantages, OTFS poses significant computational and energy {dissipation} challenges for IoT devices {(IoTDs)} \cite{xiao2022otfs}. The two-dimensional inverse symplectic finite Fourier transform (ISFFT) \cite{Mohammed2021} and  {the channel estimation} equalization process require substantial resources, which are often prohibitive for low-power IoTDs \cite{lakew2023intelligent}. Recent efforts have focused on reducing {the} OTFS complexity and energy consumption. For instance, \cite{surabhi2020lowcomplexity} proposed low-complexity equalization algorithms, while \cite{reddy2021nr_otfs} developed energy-efficient OTFS transceivers for IoT applications. However, scalability remains a critical issue, necessitating further research into lightweight OTFS implementations \cite{sui2023low,ge2021otfsreceiver}.

FAS have gained significant attention as a benefit of their ability to dynamically adjust the antenna configurations to optimize signal reception and transmission \cite{Wong2021,wong2023slowfluid}. This is achieved {at a modest} computational power, {since they do not need} channel estimation or beamforming \cite{new2024fluid}.  Recent studies have refined FAS performance analysis by incorporating {the} noise \cite{yang2023performance} into channel models. FAS has also been explored in the context of integrated data and energy transfer (IDET) \cite{lin2024performance}, {jointly} optimizing both data and energy transfer \cite{zhang2024joint,lin2024fluid}. These advances highlight the potential of FAS {in} enhancing {the} spectral vs. energy {efficiency} {of} NG networks. Apart from IDET, {its amalgam} with both index modulation \cite{yang2024position} as well as integrated sensing and communications (ISAC) \cite{wang2024fluid} shows the versatility of FAS. This adaptability as well as energy and computational efficiency makes FAS particularly {well} suited for {both} high-mobility and resource-constrained environments.

In summary, while significant progress has been made in the development of OTFS and FAS as standalone technologies, their integration remains a promising yet hitherto unexplored research direction. We address this gap by proposing an intrinsically {integrated FAS-OTFS} framework, with a focus on enhancing performance in high-mobility and resource-constrained scenarios.
%
%
%
%
\subsection{Motivation and Contributions}
This paper has made the following contributions:
\begin{itemize}
	\item A novel system model integrating FAS with OTFS in Satcoms is conceived. A general model considering many propagation paths is proposed to cover the most realistic use cases. By exploiting the unique features of FAS, like their port selection, the received SNR can be maximised. From this system model, the FAS-OTFS channel is established.
	\item Due to the complexity of the channel as well as the presence of correlation {imposed by the} FAS, there is no known distribution that describes the channel statistics. {This paper gives the formulation for the} statistical characterisation of the channel as a Gamma approximation and its features {are} derived. 
	\item The performance metrics of outage probability and ergodic capacity are derived from the SNR. Due to the presence of correlation it is {challenging} to provide a closed-form solution for the general model. For further insight, {we provide bounds} to obtain closed-form solutions.
	\item A specific scenario is considered, where there is only a single line-of-sight (LoS) path from the satellite to user. Although less general than the previous model, it is still a widely used assumption \cite{Lyras2017} due to the {specific} nature of Satcoms. By considering this special case, {this paper derives the approximations for} the closed-form expressions {of} both the outage probability and ergodic capacity using Gaussian quadratures.
\end{itemize}

The rest of this paper is organized as follows. Section \ref{sec:system_model} introduces the FAS-OTFS model followed by the statistical characterisation of the channel in Section \ref{sec:characterisation}. {The} outage probability and ergodic capacity {analysis of our}  general model is {carried out} in Section \ref{sec:general}, {and} a closed form expression is provided {for these metrics}.
%
\section{System Model}\label{sec:system_model}
{The aim of this section is to construct a correlated FAS-OTFS channel, which has not been {disseminated in the open} literature. In order to achieve this, first a standard OTFS channel {is defined} in the DD domain along with its parameters. Then, features of the FAS, most significantly {its} correlation and ports, are characterised and added to this channel model in order to obtain the FAS-OTFS channel that will be used {in} the rest of this paper. }
\subsection{Propagation Scenario}
First, the propagation scenario is discussed where the physical settings of the FAS-OTFS communication model {are} defined.

\begin{itemize}[]
	\item \textbf{LEO Satellite scenario:} A LEO satcom model is considered, where a satellite {having} a single antenna transmits to a mobile user, which has a single FA. The FA {user} is located in an urban environment with several natural obstacles like mountains and trees, which generate {multiple} paths, while maintaining a LoS path.
	\item \textbf{Bandwidth and carrier frequency:} The most standard choice for LEO modelling is the Ku-band \cite{panagopoulos2004satellite}. A centre frequency $f_c \approx 12$~GHz or 14~GHz {will be used} for link simulations.
	\item \textbf{Beam Pattern:} A broader coverage beam was considered to ensure signals can arrive at the receiver from various angles after {reflection from} different structures \cite{xia2019beam}.
	\item \textbf{Transmit power:} 	A higher transmit power is considered to make weaker paths more detectable at the receiver \cite{talgat2024maximizing}. A moderate-to-high transmit power alongside a broad beam can ensure the presence of reflected paths and that these paths remain above the noise floor. 
\end{itemize}



\noindent 

\noindent 

\noindent 
 
%
\subsection{Uncorrelated DD OTFS Satcom channel}
%
%
%
%
%
\subsubsection{LoS Path}

\begin{itemize}
	\item The delay of the direct path is $
	\tau_{\mathrm{LOS}} \approx \frac{d_{\mathrm{sat-user}}}{c}$, where $d_{\mathrm{sat-user}}$ is the {distance} from satellite to user.
	\item The Doppler of this path is
	$\nu_{\mathrm{LOS}} \approx \frac{f_c}{c}\cdot \mathbf{v}\cdot \hat{\mathbf{r}}$, where $\mathbf{v}$ is {the} satellite velocity, $\hat{\mathbf{r}}$ is the unit vector from {the} user to satellite, and $f_c$ is the carrier frequency.
	\item The gain of this path is  
	$\tilde{\alpha}_{\mathrm{LOS}} \sim \sqrt{\beta_{\mathrm{LOS}}} e^{j\phi_{\mathrm{LOS}}}$, where $\beta_{\mathrm{LOS}}$ is {the attenuation of} the LOS path (affected by distance, free-space path loss, shadowing, satellite antenna gain, etc.), and $\phi_{\mathrm{LOS}}$ is {its} phase.
\end{itemize}

\subsubsection{Multipath Clusters}
For LEO Satcoms, a Rician fading channel model is used due to the strong LoS presence.  For a total of $P$ clusters, each of which represents a combined group of paths as a result of reflections from objects in the environment, the various parameters for the $p$-th cluster can be defined as:

\begin{itemize}
	\item The delay is $\tau_p \approx \tau_{\mathrm{LOS}} + \delta \tau_p$, where $ \delta \tau_p \sim \text{Exp}(\lambda_\tau) $ reflects excess delay due to multipath propagation.
	\item The Doppler is $\nu_p \approx \nu_{\mathrm{LOS}} + \delta \nu_p$, where $\delta \nu_p \sim  \mathcal{U}(-\nu_{\max}, \nu_{\max}) $ models the spread of Doppler offsets due to relative motion.
	\item The gain of the $p$-th path is $\tilde{\alpha}_p \sim \sqrt{\beta_p} e^{j\phi_p}$, where $\phi_p \sim \mathcal{U}(0, 2\pi) $ models the random phase of the complex gain in multipath environments and $\beta_p$ is the power attenuation of the $p$-th cluster.
\end{itemize}
	\subsubsection{Delay-Doppler Representation in OTFS}
	
	In OTFS, the channel in the DD domain can now be expressed as:%
	\begin{align}\label{eqn:general_OTFS}
	h(\tau,\nu) =& \tilde{\alpha}_{\mathrm{LOS}}\,\delta(\tau - \tau_{\mathrm{LOS}})\,\delta(\nu - \nu_{\mathrm{LOS}}) \\ \nonumber
	\;&+\;\sum_{p=2}^{P}\tilde{\alpha}_{m}\,\delta(\tau - \tau_{m})\,\delta(\nu - \nu_{m}),
	\end{align}
	\noindent where $p=1$ is the dominant LoS path.
\subsection{Correlated FAS-OTFS Channel }
In order to incorporate FAS into the channel model, some key characteristics of FAS {have} to be defined. A single FA located at the receiver has a size of $W\lambda$, {constituted by} a total of $N$ {equi-}spaced ports. Therefore, the location of the $k$-th port on the FA would be {at the position} $\frac{k - 1}{N - 1}W$, and the {signal} correlation matrix between the ports $k,l$ is \cite{Wong2021}:%
\begin{align}\label{eqn:correlation}
[\mathbf{R}]_{k,l} = J_{0} \Bigl( 2\pi\frac{|k - l|}{N - 1} W
\Bigr), \quad 1 \le k,l \le N,
\end{align}

\noindent where $[\mathbf{R}]_{k,l}$ is the correlation between the $k$-th and $l$-th ports and $J_{0}$ is the zeroth order Bessel function {of the first kind}. Having defined the port characteristics, the overall FAS-OTFS channel can be obtained from (\ref{eqn:general_OTFS}) {as}:%
\begin{align}\label{eqn:FAS-OTFS_channel}
h_{k}(\tau,\nu) = \sum_{p=1}^{P} \tilde{\alpha}_{p} c_{p,k} \delta(\tau - \tau_{p}) \delta(\nu - \nu_{p}),
\end{align}

\noindent where $c_{p,k}$ is the random complex gain of path $p$ at port $k$. {Note that this model is more generalised than \ref{eqn:general_OTFS} and the LoS path is just considered one of the many Rician clusters.} The correlation is represented in this complex gain, with: %
\begin{align}\label{eqn:correlation_matrix}
\mathbb{E} \bigl[ \mathbf{c}_{p}\,\mathbf{c}_{p}^{H} \bigr] = \mathbf{R},
\quad
\mathbf{c}_{p} = \bigl[c_{p,1},\,c_{p,2},\dots,\,c_{p,N} \bigr]^{T}.
\end{align}
%
%
%
\section{Statistical Characterisation of FAS-OTFS Channel} \label{sec:characterisation}
With the FAS-OTFS channel defined, {its} statistical characterisation can be obtained. First, the distribution of the random variable $c_{p,k}$ {has} to be found. Then, the overall distribution of the summation of $P$ clusters can be found. However, since this summation is {un}likely {to obey any} known distribution, a suitable approximation is used to express $h_k$. {Once} the expression for the channel is obtained, the distribution of the SNR can be derived.

Since a Rician cluster is assumed, the channel gain $c_{p,k}$ representing the complex gain of the $p$-th cluster at the $k$-th port, can be separated into a deterministic and {a} diffuse component. For the $p$-th cluster at the $k$-th port {we have} $c_{p,k} = \mu_{p,k} + \tilde{c}_{p,k}$, where $\mu_{p,k}$ is a deterministic complex constant (the specular component) at port $k$ and $\tilde{c}_{p,k} \sim \mathcal{CN}\!\Bigl(0, [\mathbf{R}]_{k,k}\Bigr)$ is the diffuse random variable from the common correlation matrix $\mathbf{R}$. Although $[\mathbf{R}]_{k,k}=1$, {the} correlation can be {formulated} between different ports as $\mathbb{E}\bigl[\tilde{c}_{p,k}\tilde{c}_{p,l}^{*}\bigr] = \bigl[\mathbf{R}\bigr]_{k,l}$.

Since both the FAS feature of port selection as well as the {SNR} calculations and outage probability all use the channel magnitude, the distribution of $|h_k|^2$ is derived instead of $h_k$. Due to the complex nature of the squared magnitude of a sum of Rician random variables, no known distribution can directly characterise $|h_k|^2$. Therefore, an approximation is required for mathematical tractability. Although a Gaussian approximation via the central limit theorem (CLT) is generally preferred due to the ease of later analysis, Figure \ref{fig:approximation_comparison} shows that for small $P$ the Gaussian approximation is very inaccurate. Since in a LEO satellite FAS-OTFS channel {we only have a few} scatterers, in general $P \leq 5$, {the} Gamma approximation is more suitable.
\begin{figure}[!t]
	\centering
	\includegraphics[width=3in]{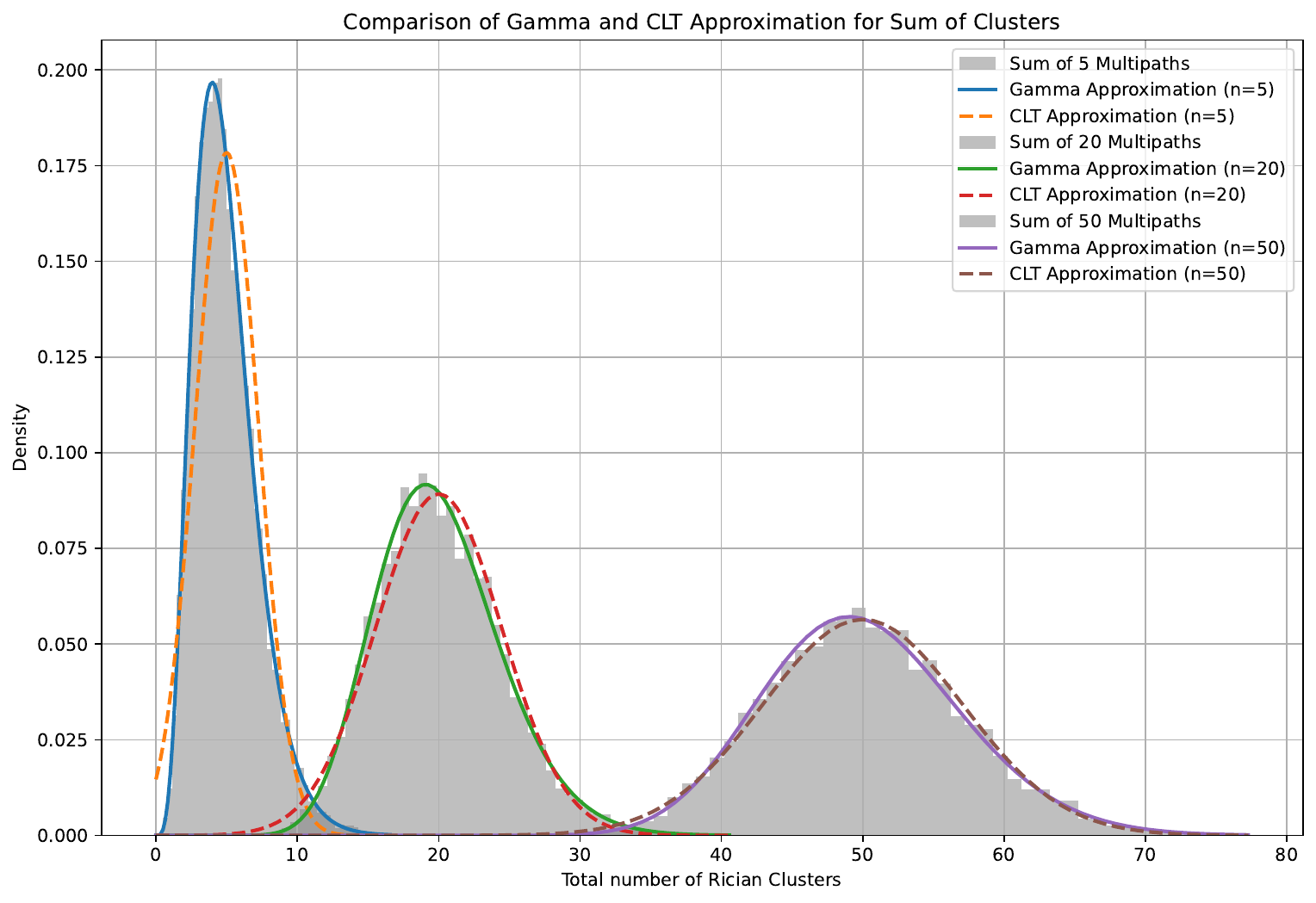}
	\vspace{-2mm}
	\caption{Comparison of the effectiveness of the Gaussian and Gamma distributions at different numbers of terms in the summation.}
	\vspace{-3mm}
	\label{fig:approximation_comparison}
\end{figure}

Using the Gamma distribution, the squared magnitude of the FAS-OTFS channel can be approximated as$ \lvert h_k \rvert^2 \sim \text{Gamma}(\alpha, \theta)$ where {we have:}%
\begin{align}\label{eqn:gamma_parameter_def}
	\alpha = \frac{\bigl(\mathbb{E}[\lvert h_k \rvert^2]\bigr)^2}{\mathrm{Var}(\lvert h_k \rvert^2)},
	\quad
	\theta = \frac{\mathrm{Var}(\lvert h_k \rvert^2)}{\mathbb{E}[\lvert h_k \rvert^2]}.
\end{align}

\noindent Therefore the first two moments of $|h_k|^2$ {have} to be derived in order to obtain the Gamma approximation. 
\begin{lemma}
	The first moment of $X = \lvert h_k \rvert^2$ is given by:%
	\begin{align}\label{eqn:first_moment}
	\mathbb{E}[\lvert h_k \rvert^2] = \sum_{p=1}^P \alpha_p^2 \bigl(\lvert \mu_{p,k} \rvert^2 + \sigma_{p,k}^2\bigr),
	\end{align}
	where $\alpha_p^2 = \tilde{\alpha}_{p} \delta(\tau - \tau_{p}) \delta(\nu - \nu_{p})$ is used to shorten the notation after combining (\ref{eqn:FAS-OTFS_channel}) and $c_{p,k}$.
\end{lemma}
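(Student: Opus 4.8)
The plan is to substitute the Rician decomposition of the per-port gain into the channel expression (\ref{eqn:FAS-OTFS_channel}), expand the squared magnitude as a double sum over clusters, and then exploit independence to eliminate every off-diagonal contribution, leaving only the $P$ diagonal terms that yield (\ref{eqn:first_moment}).

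First I would write $h_k = \sum_{p=1}^P \alpha_p c_{p,k}$ using the shorthand $\alpha_p^2 = \tilde{\alpha}_{p}\,\delta(\tau - \tau_{p})\,\delta(\nu - \nu_{p})$, which collects the complex path gain together with the delay-Doppler localisation, and then insert $c_{p,k} = \mu_{p,k} + \tilde{c}_{p,k}$. Forming $\lvert h_k\rvert^2 = h_k h_k^{*}$ produces the double sum $\sum_{p}\sum_{q}\alpha_p \alpha_q^{*}\, c_{p,k} c_{q,k}^{*}$, whose expectation separates into diagonal ($p=q$) and cross ($p\neq q$) parts. The residual randomness enters both through the diffuse components $\tilde{c}_{p,k}$ and through the uniform phases $\phi_p$ carried by $\tilde{\alpha}_p$.

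The key step, and the one I expect to be the main obstacle, is showing that the cross terms vanish. I would argue this on two complementary grounds: distinct clusters sit at distinct delay-Doppler coordinates, so the products $\delta(\tau-\tau_p)\delta(\tau-\tau_q)$ and $\delta(\nu-\nu_p)\delta(\nu-\nu_q)$ have disjoint support for $p\neq q$; and, more robustly, the path gains carry mutually independent phases $\phi_p\sim\mathcal{U}(0,2\pi)$, so that $\mathbb{E}[\tilde{\alpha}_p \tilde{\alpha}_q^{*}]=\sqrt{\beta_p\beta_q}\,\mathbb{E}[e^{j\phi_p}]\mathbb{E}[e^{-j\phi_q}]=0$. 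This second observation is what removes the otherwise surviving deterministic cross term $\mu_{p,k}\mu_{q,k}^{*}$, since the specular components are non-random and cannot be cancelled by averaging over the diffuse part alone.

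For the surviving diagonal terms I would invoke the standard mean-square of a Rician variable: because $\tilde{c}_{p,k}\sim\mathcal{CN}(0,\sigma_{p,k}^2)$ is zero-mean and independent of the deterministic $\mu_{p,k}$, we have $\mathbb{E}[\lvert c_{p,k}\rvert^2]=\lvert\mu_{p,k}\rvert^2+\sigma_{p,k}^2$. Averaging the phase out of $\lvert\tilde{\alpha}_p\rvert^2=\beta_p$ renders the weight $\alpha_p^2$ deterministic, and summing over the $P$ clusters delivers (\ref{eqn:first_moment}). The only genuinely delicate point is the handling of the squared delta functions, which I would treat, as is standard in the DD-domain OTFS literature, by interpreting the sampled channel on the discretised delay-Doppler grid so that each path occupies its own bin and the $\delta^2$ factor is absorbed into the weight $\alpha_p^2$.
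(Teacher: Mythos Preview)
Your proposal is correct and follows essentially the same route as the paper: expand $h_k=\sum_p\alpha_p(\mu_{p,k}+\tilde{c}_{p,k})$, reduce to the diagonal terms, and evaluate each as $\lvert\mu_{p,k}\rvert^2+\sigma_{p,k}^2$. In fact you are more careful than the paper, which jumps straight from $\mathbb{E}[h_kh_k^{*}]$ to the single sum~(\ref{eqn:m1_derivation_1}) without ever writing the double sum or justifying why the $p\neq q$ cross terms vanish; your random-phase and disjoint-support arguments fill that gap.
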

\begin{proof}
 The first moment of $\lvert h_k \rvert^2$ is$ \mathbb{E}[\lvert h_k \rvert^2] = \mathbb{E}[h_k h_k^*]$. Upon expanding $h_k$:%
\begin{align}\label{eqn:h_k}
	h_k = \sum_{p=1}^P \alpha_p \bigl(\mu_{p,k} + \tilde{c}_{p,k}\bigr),
\end{align}

the expectation $ \mathbb{E}[\lvert h_k \rvert^2]$ becomes:%
\begin{align}\label{eqn:m1_derivation_1}
	\mathbb{E}[\lvert h_k \rvert^2] = \sum_{p=1}^P \alpha_p^2 \mathbb{E}\Bigl[\bigl(\mu_{p,k} + \tilde{c}_{p,k}\bigr) \bigl(\mu_{p,k} + \tilde{c}_{p,k}\bigr)^*\Bigr].
\end{align}

The expectation of the $p$-th cluster is:%
\begin{align}\label{eqn:m1_derivation_2}
	\mathbb{E}\Bigl[\bigl(\mu_{p,k} + \tilde{c}_{p,k}\bigr) \bigl(\mu_{p,k} + \tilde{c}_{p,k}\bigr)^*\Bigr] = \lvert \mu_{p,k} \rvert^2 + \sigma_{p,k}^2.
\end{align}
By substituting (\ref{eqn:m1_derivation_2}) into (\ref{eqn:m1_derivation_1}), (\ref{eqn:first_moment}) is obtained thus completing the proof.
\end{proof}	
\begin{theorem}
	The variance of $\lvert h_k \rvert^2$ without considering {any} correlation between {the} ports is (\ref{eqn:variance_no_correlation}) (top of next page).
	
	\begin{figure*}
		\vspace{-3mm}
		\begin{align}\label{eqn:variance_no_correlation}
			\tilde{\text{Var}}(|h_k|^2) = \Bigl\lvert \sum_{p=1}^P \alpha_p \mu_{p,k} \Bigr\rvert^4 +	4 \Bigl\lvert \sum_{p=1}^P \alpha_p \mu_{p,k} \Bigr\rvert^2
			\Bigl(\sum_{p=1}^P \alpha_p^2 \sigma_{p,k}^2\Bigr) + 2 \Bigl(\sum_{p=1}^P \alpha_p^2 \sigma_{p,k}^2\Bigr)^2 
			- \left( \sum_{p=1}^P \alpha_p^2 \bigl(\lvert \mu_{p,k} \rvert^2 + \sigma_{p,k}^2\bigr)  \right)^2 
		\end{align}
		\vspace{-2mm}
		\hrule
		\vspace{-3mm}
	\end{figure*}

\end{theorem}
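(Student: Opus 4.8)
The plan is to use the elementary identity $\tilde{\text{Var}}(|h_k|^2) = \mathbb{E}[|h_k|^4] - \bigl(\mathbb{E}[|h_k|^2]\bigr)^2$, so that the last summand of (\ref{eqn:variance_no_correlation}) is supplied directly by the first moment already established in Lemma~1. All of the remaining work then reduces to evaluating the fourth moment $\mathbb{E}[|h_k|^4]$, and the theorem is proved by matching the three surviving terms of that expansion with the first three summands of (\ref{eqn:variance_no_correlation}).

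First I would regroup $h_k$ from (\ref{eqn:h_k}) into a deterministic specular part and a diffuse part, writing $h_k = \mu + Z$ with $\mu = \sum_{p=1}^P \alpha_p \mu_{p,k}$ and $Z = \sum_{p=1}^P \alpha_p \tilde{c}_{p,k}$. Because this computation sets aside the inter-port correlation, the diffuse contributions are treated as mutually independent across the $P$ clusters, so their weighted sum collapses to a single circularly symmetric complex Gaussian $Z \sim \mathcal{CN}\bigl(0,\sigma^2\bigr)$ with $\sigma^2 = \sum_{p=1}^P \alpha_p^2 \sigma_{p,k}^2$. This reduces the problem to the fourth moment of the magnitude of one non-central complex Gaussian.

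Next I would expand $|h_k|^4 = \bigl((\mu+Z)(\mu+Z)^*\bigr)^2$ and take the expectation term by term. The main bookkeeping step is to discard everything that vanishes by circular symmetry: all odd-order moments of $Z$, together with $\mathbb{E}[Z^2]$ and $\mathbb{E}[(Z^*)^2]$, are zero. What survives are the purely deterministic $|\mu|^4$, the second-order cross products that each contribute a factor $\mathbb{E}[|Z|^2] = \sigma^2$, and the pure diffuse term $\mathbb{E}[|Z|^4]$. Invoking the Gaussian fourth-moment (kurtosis) relation $\mathbb{E}[|Z|^4] = 2\sigma^4$ then collapses the sum to the compact form $\mathbb{E}[|h_k|^4] = |\mu|^4 + 4|\mu|^2\sigma^2 + 2\sigma^4$.

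Substituting $|\mu|^2 = \bigl|\sum_{p=1}^P \alpha_p \mu_{p,k}\bigr|^2$ and $\sigma^2 = \sum_{p=1}^P \alpha_p^2\sigma_{p,k}^2$ reproduces the first three summands of (\ref{eqn:variance_no_correlation}), while subtracting $\bigl(\mathbb{E}[|h_k|^2]\bigr)^2$ from Lemma~1 supplies the last. I expect the combinatorial fourth-moment expansion to be the main obstacle: cross terms such as $\mathbb{E}[\mu Z^* |Z|^2]$ look non-trivial but in fact vanish, since they reduce to a third-order moment of a circularly symmetric Gaussian. Tracking which terms die---and using $\mathbb{E}[|Z|^4]=2\sigma^4$ rather than the real-Gaussian kurtosis value---is where the care lies; the rest is routine algebra.
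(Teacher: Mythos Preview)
Your proposal is correct and follows essentially the same route as the paper: decompose $h_k$ into a deterministic specular sum $M=\sum_p\alpha_p\mu_{p,k}$ and a zero-mean complex Gaussian diffuse sum $Y\sim\mathcal{CN}(0,\Sigma^2)$ with $\Sigma^2=\sum_p\alpha_p^2\sigma_{p,k}^2$, then use the Rician fourth-moment identity $\mathbb{E}[|M+Y|^4]=|M|^4+4|M|^2\Sigma^2+2\Sigma^4$ and subtract the square of Lemma~1. The only cosmetic difference is that the paper quotes that fourth-moment identity as known, whereas you propose to re-derive it by expanding and killing odd/circular moments; one small wording slip is that the independence you invoke is across \emph{clusters} (different $p$), not across ports---the ``no inter-port correlation'' caveat in the theorem title is separate from, and does not by itself justify, the inter-cluster uncorrelatedness you need for $\Sigma^2=\sum_p\alpha_p^2\sigma_{p,k}^2$.
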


\begin{proof}
	Refer to Appendix \ref{appendix:gamma_parameters}.
\end{proof}
 
\begin{lemma}
	The variance of $\lvert h_k \rvert^2$ {upon} considering {the} correlation between ports is (\ref{eqn:variance}) (top of next page), {where} $\eta \in [0.1,1]$ is the control parameter that prevents the over-estimation of {the} variance in scenarios with high inter-port correlation.
	
	\begin{figure*}
		\vspace{-3mm}
		\begin{align}\label{eqn:variance}
			\text{Var}(|h_k|^2) = \tilde{\text{Var}}(|h_k|^2) +  \min\left(\frac{2}{N^2} \sum_{i \neq j} J_0\left(\frac{2\pi |i - j| W}{N-1}\right) M_1^2, \eta \cdot \tilde{\text{Var}}(|h_k|^2)\right).
		\end{align}
		\vspace{-2mm}
		\hrule
		\vspace{-3mm}
	\end{figure*}
	
\end{lemma}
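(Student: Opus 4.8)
The plan is to obtain $\text{Var}(|h_k|^2)$ in the correlated setting by re-deriving the second-order statistics of the FAS-OTFS channel while retaining precisely those cross-port terms that were discarded under the independence assumption of the preceding theorem. The uncorrelated result $\tilde{\text{Var}}(|h_k|^2)$ of (\ref{eqn:variance_no_correlation}) was obtained by treating the diffuse components $\tilde{c}_{p,k}$ at distinct ports as mutually independent, so that every expectation $\mathbb{E}[\tilde{c}_{p,k}\tilde{c}_{p,l}^{*}]$ with $k\neq l$ vanished. Reinstating the physical correlation $\mathbb{E}[\tilde{c}_{p,k}\tilde{c}_{p,l}^{*}]=[\mathbf{R}]_{k,l}=J_0\!\bigl(2\pi|k-l|W/(N-1)\bigr)$ from (\ref{eqn:correlation}) and (\ref{eqn:correlation_matrix}) reintroduces these off-diagonal contributions, and the goal is to show that, once aggregated over the port ensemble, they amount to the additive correction term appearing in (\ref{eqn:variance}).

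First I would start from $\text{Var}(|h_k|^2)=\mathbb{E}[|h_k|^4]-M_1^{2}$, writing $M_1=\mathbb{E}[|h_k|^2]$ as given by (\ref{eqn:first_moment}), and substitute the Rician decomposition $c_{p,k}=\mu_{p,k}+\tilde{c}_{p,k}$ together with (\ref{eqn:h_k}). Expanding the product and grouping terms by their port indices, I would split the expansion into a diagonal part, in which the diffuse factors share a common port, and an off-diagonal part collecting the pairs $i\neq j$. The diagonal part reproduces $\tilde{\text{Var}}(|h_k|^2)$ verbatim, since it is exactly the quantity the preceding theorem evaluated under independence. For the off-diagonal part I would evaluate the surviving cross-port expectations directly through the correlation matrix, so that each contributes a factor $[\mathbf{R}]_{i,j}=J_0\!\bigl(2\pi|i-j|W/(N-1)\bigr)$; normalising over the $N^{2}$ port pairs and carrying through the mean-power weighting then produces the term $\frac{2}{N^{2}}\sum_{i\neq j}J_0\!\bigl(2\pi|i-j|W/(N-1)\bigr)M_1^{2}$, where the factor $2$ arises from the conjugate-symmetric pair $(i,j)$ and $(j,i)$.

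The main obstacle, and the step that cannot be carried through as a strict identity, is controlling this off-diagonal correction in the strongly correlated regime. When the aperture $W$ is small the ports become nearly coherent, the Bessel-function correlations approach unity, and the raw off-diagonal sum grows without reflecting any genuine increase in the fluctuation of the selected-port magnitude; left unmodified it would overestimate $\text{Var}(|h_k|^2)$ and destabilise the Gamma fit of (\ref{eqn:gamma_parameter_def}). I would therefore cap the correction at a fixed fraction $\eta\,\tilde{\text{Var}}(|h_k|^2)$ of the uncorrelated variance, yielding the $\min(\cdot,\cdot)$ in (\ref{eqn:variance}) with $\eta\in[0.1,1]$. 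This clipping is a calibration device rather than an exact consequence of the moment expansion, so the statement is best read as an approximation that is exact in the weakly correlated limit, where the $\min$ is inactive, and empirically tightened by $\eta$ otherwise; validating the admissible range of $\eta$ against the simulated variance, as in Figure~\ref{fig:approximation_comparison}, would complete the justification.
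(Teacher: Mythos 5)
Your final structure (uncorrelated variance plus a capped, $J_0$-weighted correction) matches the paper's, and you are right that the $\min(\cdot,\,\eta\,\tilde{\text{Var}})$ cap is a calibration device rather than a consequence of any moment identity. But the central step of your derivation does not go through: you claim that expanding $\mathbb{E}[\lvert h_k\rvert^4]$ while ``retaining cross-port terms'' produces an off-diagonal sum over port pairs $i\neq j$. It cannot. For a fixed port $k$, the channel $h_k=\sum_p \alpha_p(\mu_{p,k}+\tilde c_{p,k})$ in (\ref{eqn:h_k}) involves only the variables at that one port, so $\mathbb{E}[\lvert h_k\rvert^4]$ depends solely on the joint law of $\{\tilde c_{p,k}\}_p$; the cross-port correlations $[\mathbf{R}]_{i,j}$ with $i\neq j$ simply never appear in the expansion, and since $[\mathbf{R}]_{k,k}=1$ regardless of $W$ and $N$, the strict variance of a single port's power is exactly $\tilde{\text{Var}}(\lvert h_k\rvert^2)$ with no correction at all. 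Consequently there is no decomposition in which a ``diagonal part'' yields $\tilde{\text{Var}}$ and an ``off-diagonal part'' yields the $J_0$ sum, and neither the $1/N^2$ factor nor the factor of $2$ from ``conjugate-symmetric pairs'' can emerge from such an expansion, because the terms they would multiply are absent.

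What the paper actually does (Appendix \ref{appendix:correlation}) is different in kind: the quantity being corrected is interpreted as the second moment relevant to the \emph{selected} port, $M_2=\mathbb{E}[\lvert h_{\text{opt}}\rvert^4]$, so inter-port dependence enters through the max-selection, not through any single port's marginal. The paper then \emph{postulates} a power-covariance model $\text{Cov}(\lvert h_i\rvert^2,\lvert h_j\rvert^2)=J_0\bigl(2\pi\lvert i-j\rvert W/(N-1)\bigr)\,\text{Var}(\lvert h_k\rvert^2)$ in (\ref{eqn:covariance}), aggregates it variance-of-sum style as $M_2=\tilde M_2+2\sum_{i\neq j}\text{Cov}(\lvert h_i\rvert^2,\lvert h_j\rvert^2)$, replaces $\text{Var}(\lvert h_k\rvert^2)$ by $M_1^2$ to break the circular dependence of $M_2$ on itself, and only then introduces the $1/N^2$ normalization (\ref{eqn:adjusted_correlation}) and the $\eta$ cap (\ref{eqn:capped_correlation}) as explicitly ad hoc adjustments, each justified by simulation after the unadjusted version is shown to be badly inaccurate. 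So the lemma is a simulation-calibrated model, not an identity, and a proof attempt that tries to extract the correction term from the moments of a fixed port's $\lvert h_k\rvert^2$ is pursuing a derivation that does not exist; to salvage your narrative you would need to start from the statistics of the selected port and still present the covariance form, the $M_1^2$ substitution, the $1/N^2$ scaling, and the cap as modeling choices.
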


\begin{proof}
	Refer to Appendix \ref{appendix:correlation}.
\end{proof}
 
 \section{Performance Analysis: General Case}\label{sec:general}
 
{This section presents the derivations for outage probability and ergodic capacity.} Specifically, the optimal port is determined, {then} the SNR distribution at the optimal port is calculated, and finally the outage probability and ergodic capacity are derived.

Due to the complex nature of the distributions as well as the correlation  between ports, it is difficult to obtain a closed-form expression for both {the} SNR and {for} the {other} performance metrics. In order to obtain a closed-form expression, {a} bound and approximations are proposed for the SNR which are analysed using numerical simulations. 

\subsection{Optimal Port Selection and SNR}
The instantaneous SNR at port $k$ is defined as $\text{SNR}_k = |h_k|^2$ assuming normalized noise power. The SNR at the optimal port is:\vspace{-3mm}%
\begin{align}\label{eqn:snr}
	\text{SNR}_{\text{opt}} = \max_{1 \leq k \leq N} |h_k|^2.
\end{align}\vspace{-5mm}
\subsection{Distribution of $\text{SNR}_{\text{opt}}$}
The distribution of $\text{SNR}_{\text{opt}}$ is governed by the maximum of {the} correlated random variables $\{h_k\}$. Due to the {signal} correlation among ports, the distribution is non-trivial and typically requires numerical or simulation-based methods, {leading to} no closed-form expression. 

\subsubsection{Upper and Lower Bounds}

By observing two extreme scenarios, upper and lower bounds of the SNR can be defined.

\textbf{Scenario 1: Independent Ports}

A common upper bound {hinges on the} assumption {that} the ports $h_k$ are independent. The probability that all ports have $\lvert h_k \rvert^2 < \gamma$ is given by:%
\begin{align}
\Pr\!\bigl(\max_k \lvert h_k \rvert^2 < \gamma\bigr) 
= \prod_{k=1}^N F_k(\gamma).
\end{align}
If all ports {have} identical distributions, this simplifies to $F_{\mathrm{opt}}^{(\mathrm{uncorr})}(\gamma) = \bigl[F(\gamma)\bigr]^N.$ This overestimates the variability of the maxima because independence allows for {wider} spread, making this an upper bound on the true CDF:%
\begin{align}
F_{\mathrm{opt}}(\gamma) \leq F_{\mathrm{opt}}^{(\mathrm{uncorr})}(\gamma).
\end{align}
\textbf{Scenario 2: Fully Correlated Ports}
A lower bound assumes {that} the ports are perfectly correlated, meaning $\lvert h_1 \rvert^2 = \lvert h_2 \rvert^2 = \cdots = \lvert h_N \rvert^2$. In this extreme case $\max_k \lvert h_k \rvert^2 = \lvert h_1 \rvert^2$, and the CDF of $\mathrm{SNR}_{\mathrm{opt}}$ is simply the marginal CDF of any single port:%
\begin{align}
F_{\mathrm{opt}}^{(\mathrm{full\,corr})}(\gamma) = F(\gamma).
\end{align}
Since correlation reduces variability, the true maximum distribution satisfies:%
\begin{align}
F_{\mathrm{opt}}^{(\mathrm{full\,corr})}(\gamma) \geq F_{\mathrm{opt}}(\gamma).
\end{align}

%
%
%

\subsection{Outage Probability Expressions}

The exact outage probability is defined as:%
\begin{align}\label{eqn:outage_probability}
	P_{\text{out}}(\gamma_{\text{th}}) = \Pr(\text{SNR}_{\text{opt}} < \gamma_{\text{th}}).
\end{align}

The SNR at the optimal port in (\ref{eqn:snr}) can be adjusted to consider the impact of signal power, noise and antenna efficiency, {formulated as}: %
\begin{align}
	\text{SNR}_{\text{opt}} = \frac{P_{\text{transmitted}} \cdot G}{N_0} \max_k |h_k|^2,
\end{align}

\noindent where $P_{\text{transmitted}}$ is the transmit power, $G$ is the combined antenna gain and $N_0$ is the noise power spectral density. Given this, {upon} defining the SNR threshold $\gamma_{\text{th}}' = \frac{\gamma_{\text{th}} N_0}{P_{\text{transmitted}} G}$ in terms of power, the outage probability in (\ref{eqn:outage_probability}) can be written as:%
\begin{align}
	P_{\text{out}}(\gamma_{\text{th}}) = \Pr\left(\max_k |h_k|^2 < \gamma_{\text{th}}' \right).
\end{align}
\begin{theorem}
	The exact outage probability of a FAS-OTFS system is:%
	\begin{align}\label{eqn:outage_exact}
		P_{\text{out}}(\gamma_{\text{th}}) = \sum_{k=1}^{N} (-1)^{k+1} \sum_{S_k} \det(\mathbf{R}_{S_k}) \prod_{i \in S_k} F_{|h_k|^2} (\gamma_{\text{th}}'),
	\end{align}
	
	\noindent where $S_k$ represents all possible subsets of $k$ ports and $\mathbf{R}_{S_k}$ is the sub-matrix of $\mathbf{R}$ corresponding to those ports.
\end{theorem}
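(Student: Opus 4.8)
The plan is to reduce the outage probability to a joint statement over all $N$ ports and then to resolve that joint probability with the inclusion--exclusion principle, folding the inter-port correlation of (\ref{eqn:correlation}) into each term through a determinantal weight. First I would recast the outage event of (\ref{eqn:outage_probability}) in terms of the per-port events $A_k = \{|h_k|^2 < \gamma_{\text{th}}'\}$, so that $P_{\text{out}}$ is controlled by the joint behaviour of $\{A_k\}_{k=1}^N$. Since the ports are correlated through $\mathbf{R}$, the naive product $\prod_k F_{|h_k|^2}(\gamma_{\text{th}}')$ is not exact, and the dependence must be reintroduced explicitly before any alternating expansion can be assembled.

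Next I would expand the governing probability over all port subsets via inclusion--exclusion, which generates the alternating sum $\sum_{k=1}^N (-1)^{k+1} \sum_{S_k} \Pr\bigl(\bigcap_{i \in S_k} A_i\bigr)$ indexed by the subsets $S_k$ of cardinality $k$. The crux is to evaluate the joint subset probability $\Pr\bigl(\bigcap_{i \in S_k} A_i\bigr)$ for a correlated collection of ports. Exploiting the complex-Gaussian structure of the diffuse components $\tilde{c}_{p,k} \sim \mathcal{CN}(0,[\mathbf{R}]_{k,k})$ established in Section \ref{sec:characterisation}, I would factor this joint probability as $\det(\mathbf{R}_{S_k}) \prod_{i \in S_k} F_{|h_i|^2}(\gamma_{\text{th}}')$, with the determinant of the correlation submatrix serving as the natural measure of the effective independence of the selected ports. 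Substituting this factorisation into the inclusion--exclusion expansion and collecting terms then yields (\ref{eqn:outage_exact}).

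The hard part will be justifying the determinantal factorisation itself, because for the Rician-plus-diffuse magnitudes $|h_k|^2$ this identity is not exact in general; I would anchor it in the Gaussian copula induced by the shared matrix $\mathbf{R}$, whose joint density carries a $\det(\mathbf{R}_{S_k})$ normalisation, and then discharge the burden by checking the two limiting regimes. When the selected ports are mutually uncorrelated, $\mathbf{R}_{S_k} = \mathbf{I}$ gives $\det(\mathbf{R}_{S_k}) = 1$ and the expression must collapse to the independent-port bound of Scenario 1; as the ports approach full correlation, $\mathbf{R}_{S_k}$ becomes rank-deficient, its determinant tends to zero, and the higher-order subset contributions vanish so that only the single-port marginal of Scenario 2 survives. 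Verifying that (\ref{eqn:outage_exact}) interpolates consistently between these two previously derived bounds is what would confirm that the determinantal weighting captures the correlation correctly and closes the argument.
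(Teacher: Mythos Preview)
Your route is not the paper's. The paper does not invoke inclusion--exclusion at all: it first posits that the joint CDF under correlated fading admits the determinantal representation
\[
P_{\text{out}}(\gamma_{\text{th}}) \;=\; \det\!\bigl(\mathbf{I} - \mathbf{R}\,\mathbf{F}\bigr),
\qquad \mathbf{F} = \operatorname{diag}\!\bigl(F_{|h_1|^2}(\gamma_{\text{th}}'),\dots,F_{|h_N|^2}(\gamma_{\text{th}}')\bigr),
\]
and then expands this determinant by Laplace (cofactor) expansion to produce the alternating sum over principal submatrices in (\ref{eqn:outage_exact}). In that derivation the factors $\det(\mathbf{R}_{S_k})$ arise as minors of $\mathbf{I}-\mathbf{R}\mathbf{F}$, not as a correlation correction multiplied onto a subset probability; the argument is purely algebraic rather than set-theoretic.

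There is also a genuine gap in your inclusion--exclusion step. The outage event is the \emph{intersection} $\bigcap_{k=1}^N A_k$, because $\max_k |h_k|^2 < \gamma_{\text{th}}'$ forces every port below threshold simultaneously. The alternating sum you write,
\[
\sum_{k=1}^{N} (-1)^{k+1} \sum_{S_k} \Pr\!\Bigl(\bigcap_{i\in S_k} A_i\Bigr),
\]
is the inclusion--exclusion identity for the \emph{union} $\Pr\bigl(\bigcup_k A_k\bigr)$, not for $\Pr\bigl(\bigcap_k A_k\bigr)$. Passing through the complement $\bigcup_k A_k^c$ would instead bring in survival probabilities $\Pr(|h_i|^2 \ge \gamma_{\text{th}}')$ and a leading $1-\cdots$, neither of which matches (\ref{eqn:outage_exact}). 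So the probabilistic decomposition you propose does not actually deliver the stated formula; the result really comes from expanding a determinant, and your limiting-case checks against Scenarios~1 and~2, while a sensible sanity test, cannot replace that algebraic step.
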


\begin{proof}
	Please refer to Appendix \ref{appendix:exact_outage}.
\end{proof}

\begin{theorem}
	The outage probability of an FAS-OTFS system is bounded by:%
	\begin{align}\label{eqn:outage_bounds}
		\frac{\gamma\bigl(\alpha, \frac{\gamma_{\text{th}}'}{\theta}\bigr)}{\Gamma(\alpha)}
		\leq
		F_{\mathrm{opt}}(\gamma)
		\leq
		\left[\frac{\gamma\bigl(\alpha, \frac{\gamma_{\text{th}}'}{\theta}\bigr)}{\Gamma(\alpha)}\right]^N.
	\end{align}

\end{theorem}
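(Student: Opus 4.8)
The plan is to realise the two extreme correlation regimes of Scenario~1 and Scenario~2 as the two sides of (\ref{eqn:outage_bounds}), after inserting the single-port Gamma law. First I would record the common marginal: under the approximation $|h_k|^2\sim\text{Gamma}(\alpha,\theta)$ of Section~\ref{sec:characterisation}, with $\alpha,\theta$ as in (\ref{eqn:gamma_parameter_def}), every port shares the CDF
\begin{align}\label{eqn:single_port_cdf}
F(\gamma_{\text{th}}')=\frac{\gamma\bigl(\alpha,\frac{\gamma_{\text{th}}'}{\theta}\bigr)}{\Gamma(\alpha)},
\end{align}
i.e.\ the regularised lower incomplete gamma function. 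Both sides of (\ref{eqn:outage_bounds}) are assembled from this one quantity, so the task collapses to positioning $F_{\mathrm{opt}}(\gamma)=\Pr\!\bigl(\max_k|h_k|^2<\gamma_{\text{th}}'\bigr)$ between $F(\gamma_{\text{th}}')$ and $\bigl[F(\gamma_{\text{th}}')\bigr]^N$.

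For the right-hand (upper) bound I would reuse Scenario~1 verbatim. Writing the joint outage event as $\{\max_k|h_k|^2<\gamma_{\text{th}}'\}=\bigcap_{k=1}^N\{|h_k|^2<\gamma_{\text{th}}'\}$ and imposing independence factorises it into $\bigl[F(\gamma_{\text{th}}')\bigr]^N$; the spread argument of that scenario then supplies $F_{\mathrm{opt}}(\gamma)\le\bigl[F(\gamma_{\text{th}}')\bigr]^N$, which is exactly the right inequality in (\ref{eqn:outage_bounds}). For the left-hand (lower) bound I would start from Scenario~2: collapsing all $N$ ports onto one perfectly-correlated copy makes $\max_k|h_k|^2=|h_1|^2$, so the fully-coupled CDF equals $F(\gamma_{\text{th}}')$, and I would argue that re-introducing the partial, Bessel-type coupling of (\ref{eqn:correlation}) cannot drive $F_{\mathrm{opt}}$ below this fully-coupled reference, yielding $F(\gamma_{\text{th}}')\le F_{\mathrm{opt}}(\gamma)$. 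Chaining the two gives (\ref{eqn:outage_bounds}).

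The hard part will be fixing the \emph{direction} of these two inequalities consistently, because the two reference curves obey $\bigl[F\bigr]^N\le F$ for every $F\in[0,1]$, so they do not bracket a generic $F_{\mathrm{opt}}$ in the naive left-to-right order. I would therefore make the correlation-monotonicity step explicit rather than heuristic: treating the ports through the matrix $\mathbf{R}$ of (\ref{eqn:correlation_matrix}) and sweeping its off-diagonal Bessel entries from the decoupled to the fully-coupled limit, I would track how $\Pr\!\bigl(\bigcap_k\{|h_k|^2<\gamma_{\text{th}}'\}\bigr)$ responds, and assign the endpoints so that the fully-coupled limit is the lower reference and the decoupled limit the upper reference of (\ref{eqn:outage_bounds}). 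Pinning down this endpoint assignment rigorously — in place of the informal variability/spread wording of Scenario~1 and Scenario~2 — is the crux, and I expect it to need a careful association-type comparison tailored to the Gamma-approximated, $\mathbf{R}$-correlated ports.
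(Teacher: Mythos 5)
Your construction is the same as the paper's (Appendix~\ref{appendix:bounds}): insert the single-port Gamma marginal $F_S(\gamma_{\text{th}}')=\gamma\bigl(\alpha,\gamma_{\text{th}}'/\theta\bigr)/\Gamma(\alpha)$ from (\ref{eqn:gamma_parameter_def}), take the fully correlated configuration as one endpoint (giving $F_S$) and the independent configuration as the other (giving $[F_S]^N$), exactly as in Scenarios~1 and~2. The paper's proof is nothing more than this substitution plus the informal ``variability'' remarks; it never confronts the ordering of the two endpoints. You do confront it, and your observation is correct and important: since $[F_S]^N\le F_S$ for any $F_S\in[0,1]$, the chain in (\ref{eqn:outage_bounds}) as printed cannot bracket a nondegenerate $F_{\mathrm{opt}}$ --- it would force $F_S=F_{\mathrm{opt}}=[F_S]^N$, possible only when $F_S\in\{0,1\}$. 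In this respect your attempt is more careful than the paper's own proof, which simply asserts the labels ``lower $=$ fully correlated'' and ``upper $=$ independent.''

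The gap is in your proposed rescue: the correlation sweep cannot be made to deliver the endpoint assignment you aim for. A Slepian/association-type comparison for positively correlated ports runs in the opposite direction: $\Pr\bigl(\bigcap_k\{|h_k|^2<\gamma_{\text{th}}'\}\bigr)$ \emph{increases} as the off-diagonal entries of $\mathbf{R}$ in (\ref{eqn:correlation_matrix}) grow, so the decoupled limit $[F_S]^N$ is the \emph{smallest} value and the fully coupled limit $F_S$ the \emph{largest}; moreover $F_{\mathrm{opt}}\le F_S$ holds unconditionally because the intersection event is contained in any single-port event $\{|h_1|^2<\gamma_{\text{th}}'\}$. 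Hence the only consistent statement is $[F_S(\gamma_{\text{th}}')]^N\le F_{\mathrm{opt}}\le F_S(\gamma_{\text{th}}')$, i.e.\ the two sides of (\ref{eqn:outage_bounds}) must be swapped (equivalently, the ``upper''/``lower'' designations in Scenarios~1 and~2 are inverted); no association argument can validate the printed order. Two further caveats if you formalize the corrected version: the product lower bound genuinely requires a positive-dependence hypothesis, which is not automatic here since $J_0$ in (\ref{eqn:correlation}) oscillates and takes negative values at larger port separations; and the elementary upper bound $F_{\mathrm{opt}}\le F_S$ needs no correlation assumption at all, so only one side of the corrected sandwich carries any real proof burden.
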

	
\begin{proof}
	Please refer to Appendix \ref{appendix:bounds}.
\end{proof}

\subsection{Asymptotic Bounds}

For specific ranges of $\gamma_{\mathrm{th}}$ (e.g., very small or very large), asymptotic approximations can simplify the outage probability bounds. By obtaining an expression for the single-port CDF $F_S(\gamma_{\mathrm{th}})$, this can then be substituted into the outage probability expressions. 

\subsubsection{Small $\gamma_{\mathrm{th}} / \theta$}

When \(\gamma_{\mathrm{th}} / \theta \ll 1\), the incomplete gamma function can be approximated as:%
\begin{align}
	\gamma(\alpha, \tfrac{\gamma_{\mathrm{th}}}{\theta}) \approx \frac{(\gamma_{\mathrm{th}} / \theta)^\alpha}{\alpha}.
\end{align}

Substituting this into the single-port CDF {yields}:%
\begin{align}\label{eqn:outage_asymptotic_small}
	F_S(\gamma_{\mathrm{th}}) \approx \frac{1}{\Gamma(\alpha)} \frac{(\gamma_{\mathrm{th}} / \theta)^\alpha}{\alpha}.
\end{align}

%
%

\subsubsection{ Large \(\gamma_{\mathrm{th}} / \theta\)}

When $\gamma_{\mathrm{th}} / \theta \gg 1$, the incomplete gamma function can be approximated as:%
\begin{align}
	\gamma(\alpha, \tfrac{\gamma_{\mathrm{th}}}{\theta}) \approx \Gamma(\alpha) - e^{-\gamma_{\mathrm{th}} / \theta} \frac{(\gamma_{\mathrm{th}} / \theta)^{\alpha - 1}}{\Gamma(\alpha)}.
\end{align}

Substituting this into the single-port CDF {of (\ref{eqn:outage_bounds}) yields}:%
\begin{align}
F_S(\gamma_{\mathrm{th}}) \approx 1 - e^{-\gamma_{\mathrm{th}} / \theta} \frac{(\gamma_{\mathrm{th}} / \theta)^{\alpha - 1}}{\Gamma(\alpha)}.
\end{align}

%
%

\subsection{Ergodic Capacity}


This subsection derives the ergodic capacity bounds for \(\mathrm{SNR}_{\mathrm{opt}} = \max_{1 \leq k \leq N} \lvert h_k \rvert^2\), using the previously established outage probability bounds. The ergodic capacity is defined as the expected value of the channel capacity:%
\begin{align}\label{eqn:capacity}
	C = \mathbb{E}[\log_2(1 + \gamma)] = \int_0^\infty \log_2(1 + \gamma) f_{\mathrm{opt}}(\gamma) \, d\gamma,
\end{align}

\noindent where \(f_{\mathrm{opt}}(\gamma)\) is the PDF of \(\mathrm{SNR}_{\mathrm{opt}}\), derived from the CDF \(F_{\mathrm{opt}}(\gamma)\).

\begin{lemma}
	The exact ergodic capacity for a FAS-OTFS system is: \vspace{-5mm}%
	\begin{align}\label{eqn:exact_capacity}
		C_{\text{erg}} = \frac{1}{\ln 2} \int_0^{\infty} \frac{1 - P_{\text{out}}(\gamma)}{1 + \gamma} d\gamma.
	\end{align}
\end{lemma}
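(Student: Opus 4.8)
The plan is to establish the identity through integration by parts, converting the PDF-weighted integral of the definition into the complementary-CDF form of the claim. First I would rewrite the definition in (\ref{eqn:capacity}) by factoring out the constant $1/\ln 2$, so that $C = \frac{1}{\ln 2}\int_0^\infty \ln(1+\gamma)\, f_{\mathrm{opt}}(\gamma)\,d\gamma$. This recasts the integrand in a form whose logarithmic factor has the convenient derivative $\frac{d}{d\gamma}\ln(1+\gamma) = \frac{1}{1+\gamma}$, which is precisely the kernel appearing in the target expression (\ref{eqn:exact_capacity}).

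Next I would apply integration by parts with $u = \ln(1+\gamma)$ and $dv = f_{\mathrm{opt}}(\gamma)\,d\gamma$, so that $du = \frac{d\gamma}{1+\gamma}$. The crucial modelling choice is to take the antiderivative as $v = F_{\mathrm{opt}}(\gamma) - 1 = -\bigl(1 - F_{\mathrm{opt}}(\gamma)\bigr)$ rather than $F_{\mathrm{opt}}(\gamma)$ itself; shifting by the constant $-1$ forces $v \to 0$ at the upper limit and makes the boundary contributions cancel cleanly while leaving the $1 - F_{\mathrm{opt}}(\gamma)$ survival function in the remaining integral.

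I would then evaluate the boundary term $\bigl[\ln(1+\gamma)\bigl(F_{\mathrm{opt}}(\gamma)-1\bigr)\bigr]_0^\infty$. At the lower limit $\gamma = 0$ the factor $\ln(1+0) = 0$ annihilates the product. At the upper limit the factor $F_{\mathrm{opt}}(\gamma) - 1 \to 0$, and the outstanding task is to confirm that the logarithmic growth of $\ln(1+\gamma)$ is dominated by the decay of $1 - F_{\mathrm{opt}}(\gamma)$. This is where I expect the only real subtlety to lie: since $\mathrm{SNR}_{\mathrm{opt}}$ is the maximum of Gamma-distributed port gains characterised in Section \ref{sec:characterisation}, its complementary CDF inherits an exponential tail, and an exponential decay overwhelms any logarithmic factor, so $\ln(1+\gamma)\bigl(1 - F_{\mathrm{opt}}(\gamma)\bigr) \to 0$. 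Hence the boundary term vanishes identically.

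With the boundary contribution eliminated, the integration-by-parts formula reduces to $\int_0^\infty \ln(1+\gamma)\, f_{\mathrm{opt}}(\gamma)\,d\gamma = \int_0^\infty \frac{1 - F_{\mathrm{opt}}(\gamma)}{1+\gamma}\,d\gamma$. Finally, identifying $P_{\mathrm{out}}(\gamma) = \Pr(\mathrm{SNR}_{\mathrm{opt}} < \gamma) = F_{\mathrm{opt}}(\gamma)$ from (\ref{eqn:outage_probability}) substitutes the survival function $1 - P_{\mathrm{out}}(\gamma)$ into the numerator. Reinstating the $1/\ln 2$ prefactor then recovers (\ref{eqn:exact_capacity}) exactly, completing the argument.
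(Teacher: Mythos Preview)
Your proposal is correct and follows essentially the same approach as the paper: start from the definition (\ref{eqn:capacity}) and apply integration by parts to obtain (\ref{eqn:exact_capacity}). Your treatment is in fact more careful than the paper's own proof, which merely states ``using integration by parts'' without discussing the choice of antiderivative or the vanishing of the boundary term.
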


\begin{proof}
	Since $P_{\text{out}}(\gamma)$ is directly derived in (\ref{eqn:outage_exact}), this can be substituted into (\ref{eqn:capacity}). Then, using integration by parts, this can be rewritten as (\ref{eqn:exact_capacity}).
\end{proof}

%
%

\subsection{ Capacity Bounds}

\subsubsection*{Lower Bound}
Substituting $f_S(\gamma)$ {from (\ref{eqn:outage_bounds})} directly into the ergodic capacity expression (\ref{eqn:capacity}) yields:%
\begin{align}\label{eqn:capacity_lower}
	C_{\mathrm{lower}} = \int_0^\infty \log_2(1 + \gamma) \frac{1}{\Gamma(\alpha) \theta^\alpha} \gamma^{\alpha-1} e^{-\frac{\gamma}{\theta}} \, d\gamma.
\end{align}

\subsubsection*{Upper Bound}
For the upper bound, substituting $N [F_S(\gamma)]^{N-1} f_S(\gamma)$ {from (\ref{eqn:outage_bounds})} for $f_{\mathrm{opt}}(\gamma)$ {in (\ref{eqn:capacity})} gives:\vspace{-2mm}%
\begin{align}\label{eqn:capacity_upper}
	C_{\mathrm{upper}} = \int_0^\infty \log_2(1 + \gamma)& N \left( \frac{\gamma\bigl(\alpha, \frac{\gamma}{\theta}\bigr)}{\Gamma(\alpha)} \right)^{N-1}\\ \nonumber 
	&\times \frac{1}{\Gamma(\alpha) \theta^\alpha} \gamma^{\alpha-1} e^{-\frac{\gamma}{\theta}} \, d\gamma.
\end{align}
\vspace{-5mm}%
\subsection{Special Case Approximations}
\subsubsection*{Small \(\gamma / \theta\)}
When $\gamma / \theta \ll 1$, {we have} $\log_2(1 + \gamma) \approx \frac{\gamma}{\ln(2)}$. Substituting this approximation {into (\ref{eqn:capacity_lower}) yields}:%
\paragraph{Lower Bound}%
\begin{align}
C_{\mathrm{lower}} \approx \frac{1}{\ln(2) \Gamma(\alpha) \theta^\alpha} \int_0^\infty \frac{\gamma^\alpha}{\theta} e^{-\frac{\gamma}{\theta}} \, d\gamma.
\end{align}
The integral evaluates to:%
\begin{align}
C_{\mathrm{lower}} \approx \frac{\alpha}{\ln(2)}.
\end{align}

\paragraph{Upper Bound}
For small \(\gamma\), \([F_S(\gamma)]^{N-1} \approx \left( \frac{\gamma}{\theta} \right)^{(N-1)\alpha}\), and the upper bound {in (\ref{eqn:capacity_upper})} becomes:%
\begin{align}
	C_{\mathrm{upper}} \approx \frac{N}{\ln(2)} \frac{1}{\Gamma(\alpha)} \int_0^\infty \gamma^{(N-1)\alpha + \alpha - 1} e^{-\frac{\gamma}{\theta}} \, d\gamma.
\end{align}

\subsubsection*{ Large \(\gamma / \theta\)}
When \(\gamma / \theta \gg 1\), \(\log_2(1 + \gamma) \approx \log_2(\gamma)\). Substituting this approximation {into (\ref{eqn:capacity}) yields}:%
\begin{align}
	C_{\mathrm{lower}} \approx \int_0^\infty \log_2(\gamma) f_S(\gamma) \, d\gamma.
\end{align}

The same applies to the upper bound with \(f_{\mathrm{opt}}(\gamma)\).

\section{Performance Analysis: Considering a Specific Scenario}\label{sec:specific}

Having provided bounds for outage probability and ergodic capacity, we provide a closed form expression for a special case. If a less general model was considered, a closed-form expression {may be possible} for both {the} outage and {the} capacity. By using a different model for Rician fading as well as applying new approximations, {a} closed-form expression {is obtained} for outage probability and ergodic capacity.

\subsection{Simplifying the Channel} 

The general channel expression for a FAS-OTFS channel was given in (\ref{eqn:FAS-OTFS_channel}). {Let us now} assume the specific scenario, where the LEO satellite is directly above the mobile device and also transmitting in a narrow beam. {This} results in a very strong LoS path from the satellite to the FAS, with the presence of any multipaths or clusters being negligible. By assuming {that} there is a single Rician cluster, this results in $P=1$ and (\ref{eqn:FAS-OTFS_channel}) can be re-written as:%
\begin{equation}
	h_k(\tau, \nu) =  \tilde{\alpha}_0 (\mu_{k} + \tilde{c}_{k}) \delta({\tau - \tau_0}) \delta({\nu - \nu_0}),
\end{equation}
 \noindent where $\tilde{\alpha}_0, \tau_0, \nu_0$ are the gain, delay and Doppler of the single LoS path. 
 \subsection{FAS Correlation Model}
 FAS Rayleigh channels have been defined in {the} previous literature as \cite{chai2022mlfluid}: %
 \begin{equation}
 	g_k = (\sqrt{1-\mu^2} x_k + \mu x_0) + j(\sqrt{1-\mu^2} y_k + \mu y_0),
 \end{equation}
 
\noindent where $\mu = \sqrt{2} \sqrt{ {}_1F_2 \left( \frac{1}{2}; 1, \frac{3}{2}; -\pi^2 W^2 \right) } - \frac{J_1(2\pi W)}{2\pi W} $ is the correlation between ports \cite{wong2022closedform} and $x_0, x_k, y_0, y_k$ are all Gaussian random variables with zero mean and a variance of $\frac{1}{2}$. {Furthermore}, ${}_1F_2 (\cdot)$ is the generalised hypergeometric function.
 
 In order to turn this FAS channel into a Rician fading channel, a deterministic complex component $\alpha_k$ is added: %
 \begin{align}
 	g_k &= \sqrt{\frac{K}{K+1}} \alpha_k \\ \nonumber 
 	&+ \sqrt{\frac{1}{K+1}} \left[ \sqrt{1-\mu^2} (x_k + jy_k ) + \mu(x_0 + jy_0)  \right]
 \end{align}
 
 By substituting $\mu_{p,k} + \tilde{c}_{p,k} = g_k$ {into (\ref{eqn:h_k})}, the overall FAS-OTFS channel can be obtained as $h_k (\tau, \nu) = \beta (\tau, \nu) g_k$, where $\beta (\tau, \nu) = \tilde{\alpha}_0  \delta({\tau - \tau_0}) \delta({\nu - \nu_0})$ is used for compactness. 
 
 \subsection{Conditioning on $x_0, y_0$}
 
 By fixing $x_0, y_0$, $h_k$ can be re-written as a sum of its constant and deterministic components: %
 \begin{align}
 	h_k =&  \left[ \alpha_k \beta \sqrt{\frac{K}{K+1}} +  \mu \beta \sqrt{\frac{1}{K+1}}(x_0 + jy_0) \right] \\ \nonumber 
 	&+ \beta \sqrt{\frac{1-\mu^2}{K+1}} (x_k + jy_k ).
 \end{align}
 
 With a deterministic and constant component, the pdf of $|h_k|$ would be a Rician distribution. For $r=|h_k|$ given $x_0,y_0$: %
 \begin{align}
 	f_{r \big|x_0,y_0}(r) = \frac{r}{\sigma^2}e^{- \frac{r^2 + |c_k|^2}{2\sigma^2}} \text{I}_0(\frac{r|c_k|}{\sigma^2}),
 \end{align}
 
 \noindent where {we have} $\sigma^2 = \beta^2 \frac{1-\mu^2}{K+1}$ and $c_k = \alpha_k \beta \sqrt{\frac{K}{K+1}} +  \mu \beta \sqrt{\frac{1}{K+1}}(x_0 + jy_0)$. conditioned on $x_0, y_0$, $|h_1|,...,|h_N|$ are all independent and the joint pdf can be written as: %
 \begin{align}\label{eqn:conditioned_pdf}
 	f_{|h_1|,...,|h_N| \big|x_0,y_0}(r_1, ..., r_N) = \prod^N_{k=1} \left[ f_{|h_k| \big|x_0,y_0}(r_k)  \right].
 \end{align}

 \begin{theorem}
 	The unconditioned joint pdf of $|h_k|$ is {given by}  (\ref{eqn:pdf_unconditioned}) (top of next page).
 		\begin{figure*}
 			\vspace{-3mm}
 		\begin{align}\label{eqn:pdf_unconditioned}
 			f_{|h_1|,...,|h_N| }(r_1, ..., r_N) = \frac{1}{2\pi} \int^{\infty}_{\infty} \int^{\infty}_{\infty}	 \prod^N_{k=1} \left[ \frac{r}{\sigma^2}e^{- \frac{r^2 + |c_k|^2}{2\sigma^2}} \text{I}_0(\frac{r|c_k|}{\sigma^2})  \right]  \frac{1}{2\pi} e^{- \frac{x^2 + y^2}{2}} dx dy.
 		\end{align}
 		\vspace{-2mm}
 		\hrule
 		\vspace{-3mm}
 	\end{figure*}
 \end{theorem}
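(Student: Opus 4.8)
The plan is to recover the unconditioned joint density simply by marginalising the conditional density of (\ref{eqn:conditioned_pdf}) over the common variables $x_0$ and $y_0$, i.e.\ by an application of the law of total probability. The conceptual point underpinning the whole construction is that, once $x_0$ and $y_0$ are fixed, each $c_k$ is a deterministic constant and the only remaining randomness at port $k$ is the independent diffuse term $x_k + jy_k$; this is exactly why the conditional density factorises into the product of per-port Rician densities in (\ref{eqn:conditioned_pdf}). All inter-port correlation is therefore carried by the shared pair $(x_0,y_0)$, which enters every $c_k$ through the common additive term $\mu\beta\sqrt{1/(K+1)}(x_0+jy_0)$.

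Concretely, first I would write, with the shorthand $\mathbf r = (r_1,\dots,r_N)$,
\begin{align}
f_{|h_1|,\dots,|h_N|}(\mathbf r) = \int_{-\infty}^{\infty}\!\!\int_{-\infty}^{\infty} f_{|h_1|,\dots,|h_N|\,|\,x_0,y_0}(\mathbf r)\, f_{x_0,y_0}(x,y)\, dx\,dy,
\end{align}
using the independence of the common pair $(x_0,y_0)$ from the per-port diffuse terms. Next I would identify $f_{x_0,y_0}(x,y)$: since $x_0$ and $y_0$ are independent zero-mean Gaussians, their joint density is the isotropic bivariate Gaussian weight $\tfrac{1}{2\pi}e^{-(x^2+y^2)/2}$. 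Substituting the factorised conditional density from (\ref{eqn:conditioned_pdf}), with each factor replaced by its Rician form $\tfrac{r}{\sigma^2}e^{-(r^2+|c_k|^2)/(2\sigma^2)}\mathrm{I}_0\!\left(r|c_k|/\sigma^2\right)$, and collecting the product and the Gaussian weight under the double integral yields precisely (\ref{eqn:pdf_unconditioned}).

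In truth there is no hard analytical step in \emph{obtaining} the expression: the result is a direct marginalisation, and the only care required is bookkeeping the normalisation and variance of the common Gaussian pair so that the weight matches the stated bivariate density. The genuine difficulty is downstream and is exactly what motivates the remainder of the section: because each $|c_k|$ depends on the integration variables $x$ and $y$, the integrand does \emph{not} separate into a product over $k$, so the double integral cannot be collapsed into a closed-form product and must instead be evaluated numerically (e.g.\ via Gaussian quadrature). I would therefore present (\ref{eqn:pdf_unconditioned}) as the exact but non-separable joint density, flagging that this non-separability is the mathematical signature of the FAS-induced inter-port correlation.
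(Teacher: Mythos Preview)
Your proposal is correct and mirrors the paper's own proof essentially step for step: both write the unconditioned joint density as the marginalisation $\int\!\!\int f_{|h_1|,\dots,|h_N|\mid x_0,y_0}(\mathbf r)\,f_{x_0,y_0}(x,y)\,dx\,dy$, identify $f_{x_0,y_0}$ as the isotropic bivariate Gaussian weight, and substitute the factorised Rician conditional density from (\ref{eqn:conditioned_pdf}) to arrive at (\ref{eqn:pdf_unconditioned}). Your additional commentary on why the integrand fails to separate over $k$ is a useful gloss but not part of the paper's argument, which is bare substitution.
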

 
 \begin{proof}
 	The unconditioned pdf of $|h_k|$ can be obtained by integrating the product of the conditional pdf and the joint pdf of $x_0, y_0$ with respect to $(x_0,y_0)$, {yielding}: %
 	\begin{align}\label{eqn:pdf_unconditioned_definition}
 		&f_{|h_1|,...,|h_N| }(r_1, ..., r_N) = \\ \nonumber 
 		&\int^{\infty}_{\infty} \int^{\infty}_{\infty}	f_{|h_1|,...,|h_N| \big|x_0,y_0}(r_1, ..., r_N) f_{x_y,y_0}(x,y) dx dy.
 	\end{align}
 	
 	Since $x_0,y_0$ are normally distributed with zero mean and a variance of $\frac{1}{2}$, the joint pdf is: %
 	\begin{equation}\label{eqn:pdf_x0_y0}
 		f_{x_0,y_0}(x,y) = \frac{1}{2\pi} e^{- \frac{x^2 + y^2}{2}}.
 	\end{equation}
 \end{proof}
 By substituting (\ref{eqn:conditioned_pdf}) and (\ref{eqn:pdf_x0_y0}) into (\ref{eqn:pdf_unconditioned_definition}) the unconditioned joint pdf of $\lvert h_k \rvert$ is obtained.
 \subsection{Outage Probability Expressions}
 The outage probability can be expressed as {seen in} (\ref{eqn:outage_probability_def}) (top of next page).
 \begin{figure*}
 	\vspace{-3mm}
 	\begin{align}\label{eqn:outage_probability_def}
 		p(\gamma_{th}) = P\!\left(\max_k |h_k| \leq \gamma_{\mathrm{th}} )\right) = \int^{\gamma_{th}}_0 \dots \int^{\gamma_{th}}_0 f_{|h_1|,...,|h_N| }(r_1, ..., r_N) dr_1 \dots dr_N.
 	\end{align}
 	\vspace{-2mm}
 	\hrule
 	\vspace{-3mm}
 \end{figure*}
 
 \begin{theorem}
 	The exact outage probability of a FAS-OTFS system considering a single LoS path is (\ref{eqn:special_outage_probability_exact}) (top of next page).
 	\begin{figure*}
 		\vspace{-3mm}
 		\begin{align}\label{eqn:special_outage_probability_exact}
 			p(\gamma_{th}) = \int^{\infty}_{\infty} \int^{\infty}_{\infty}	 \prod^N_{k=1} \left[ 1 - \text{Q}_1 \left(  \frac{|c_k (x,y)|}{\sigma}, \frac{\gamma_{th}}{\sigma}  \right) \right]  \frac{1}{2\pi} e^{- \frac{x^2 + y^2}{2}} dx dy,
 		\end{align}
 		\vspace{-2mm}
 		\hrule
 		\vspace{-3mm}
 	\end{figure*}
 	where $\text{Q}_1(\cdot)$ is the Marcum-Q function of the first order.
 \end{theorem}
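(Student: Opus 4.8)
The plan is to start from the defining $N$-fold integral (\ref{eqn:outage_probability_def}) for $p(\gamma_{th})$, substitute the unconditioned joint pdf (\ref{eqn:pdf_unconditioned}), and then exploit the conditional independence of the port magnitudes to collapse the $N$-fold integral over $(r_1,\dots,r_N)$ into a product of single integrals, each of which is recognised as a Rician cdf and rewritten through the first-order Marcum-Q function.

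First I would write $p(\gamma_{th})$ as the nested integral of $f_{|h_1|,\dots,|h_N|}(r_1,\dots,r_N)$ over the box $[0,\gamma_{th}]^N$ and insert the representation (\ref{eqn:pdf_unconditioned}), in which the joint pdf is itself a Gaussian average over $(x_0,y_0)=(x,y)$ of the product $\prod_k f_{|h_k|\mid x_0,y_0}(r_k)$. Since every factor in the integrand is non-negative, Tonelli's theorem permits interchanging the $(x,y)$-integration with the $(r_1,\dots,r_N)$-integration; I would then pull the Gaussian weight $\frac{1}{2\pi}e^{-(x^2+y^2)/2}$ outside the box integral.

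Next, conditioned on $(x,y)$ the variables $|h_1|,\dots,|h_N|$ are independent, because the only shared randomness $(x_0,y_0)$ has been fixed. Hence the integrand factorises and the box integral separates, so that $\int_{[0,\gamma_{th}]^N}\prod_k f_{|h_k|\mid x_0,y_0}(r_k)\,dr_1\cdots dr_N = \prod_{k=1}^N \int_0^{\gamma_{th}} f_{|h_k|\mid x_0,y_0}(r)\,dr$. Each single integral is exactly the conditional cdf $\Pr(|h_k|\le\gamma_{th}\mid x_0,y_0)$ of a Rician variable with non-centrality parameter $|c_k(x,y)|$ and scale $\sigma$. The final step invokes the textbook identity linking the Rician cdf to the Marcum-Q function: for $R$ with pdf $\frac{r}{\sigma^2}e^{-(r^2+\nu^2)/(2\sigma^2)}I_0(r\nu/\sigma^2)$, the substitution $x=t/\sigma$ in the tail integral gives $\Pr(R>r)=Q_1(\nu/\sigma,\,r/\sigma)$, so $\Pr(R\le r)=1-Q_1(\nu/\sigma,\,r/\sigma)$. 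Applying this with $\nu=|c_k(x,y)|$ and $r=\gamma_{th}$ converts each factor into $1-Q_1\!\left(|c_k(x,y)|/\sigma,\,\gamma_{th}/\sigma\right)$, and reassembling the product under the Gaussian integral yields precisely (\ref{eqn:special_outage_probability_exact}).

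The only step needing genuine care is the interchange of integration order, which is nonetheless clean here thanks to the non-negativity of the integrand (Tonelli applies without further hypotheses). The remaining work, namely the factorisation via conditional independence and the Rician-to-Marcum-Q substitution, is routine and presents no real obstacle; the essential insight is simply that conditioning on the common component $(x_0,y_0)$ renders the ports independent, which is what makes the $N$-fold integral tractable.
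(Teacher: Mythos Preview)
Your proposal is correct and follows essentially the same route as the paper's proof: substitute (\ref{eqn:pdf_unconditioned}) into (\ref{eqn:outage_probability_def}), interchange the $(x,y)$ and $(r_1,\dots,r_N)$ integrations, factorise via conditional independence, and identify each single-port integral as a Rician CDF expressed through $Q_1$. The only cosmetic difference is that you invoke Tonelli (non-negativity) where the paper cites Fubini, and you spell out the conditional-independence factorisation that the paper leaves implicit in passing from (\ref{eqn:exact_outage_v1}) to (\ref{eqn:exact_outage_v2}).
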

 \begin{proof}
 	Please refer to Appendix \ref{appendix:exact_outage_special}.
 \end{proof}
 
 In order to obtain an approximation for the closed form solution, the Gauss-Hermite quadrature \cite{abramowitz1972handbook} can be used, which takes the form: %
 \begin{align}\label{eqn:gauss_hermite}
 	 \int_{-\infty}^\infty e^{-t^2} f(t) \, \mathrm{d}t \approx \sum_{m=1}^M w_m f(x_m),
 \end{align}
 where $ M$ is the number of quadrature points (nodes and weights), $ \{x_m\}_{m=1}^M $ are the nodes (precomputed roots of the Hermite polynomial $ H_M(x) $) and $ \{w_m\}_{m=1}^M $ are the weights (derived from the Hermite polynomials).
 
 \begin{theorem}
 	The approximation of the closed-form solution of the outage probability via the Gauss-Hermite quadrature is (\ref{eqn:quadrature_outage}) (top of next page), where $ M $ is the number of quadrature points in each dimension, $\{x_m, x_n\} $ are the nodes and $ \{w_m, w_n\} $ are the corresponding weights.
 		\begin{figure*}
 			\vspace{-3mm}
 		\begin{align}\label{eqn:quadrature_outage}
 			P(\text{outage}) \approx \frac{1}{\pi} \sum_{m=1}^M \sum_{n=1}^M w_m w_n \prod_{k=1}^N \left[1 - Q_1\left(\frac{|c_k(\sqrt{2} x_m, \sqrt{2} x_n)|}{\sigma}, \frac{\gamma_{\mathrm{th}}}{\sigma}\right)\right],
 		\end{align}
 		\vspace{-2mm}
 		\hrule
 		\vspace{-3mm}
 	\end{figure*}
 \end{theorem}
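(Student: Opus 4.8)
The plan is to start from the exact double-integral representation of the outage probability in (\ref{eqn:special_outage_probability_exact}) and reduce it to a form where the standard Gauss--Hermite rule (\ref{eqn:gauss_hermite}) applies directly in each of the two dimensions. The only structural obstacle is that (\ref{eqn:special_outage_probability_exact}) carries the Gaussian weight $\frac{1}{2\pi}e^{-(x^2+y^2)/2}$, whereas the Gauss--Hermite quadrature is defined against the weight $e^{-t^2}$. Everything else follows from the fact that the weight factorises across $x$ and $y$, so a two-dimensional tensor-product rule is legitimate.

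First I would rescale the integration variables. Setting $x = \sqrt{2}\,u$ and $y = \sqrt{2}\,v$ converts $e^{-(x^2+y^2)/2}$ into $e^{-u^2}e^{-v^2}$, exactly the product of two Gauss--Hermite weights, while the Jacobian contributes a factor of $2$. Combined with the $\frac{1}{2\pi}$ prefactor this yields the constant $\frac{2}{2\pi}=\frac{1}{\pi}$ that appears out front in (\ref{eqn:quadrature_outage}). Under this substitution the integrand $\prod_{k=1}^N\bigl[1 - Q_1(|c_k(x,y)|/\sigma,\ \gamma_{\mathrm{th}}/\sigma)\bigr]$ becomes the same product evaluated at $c_k(\sqrt{2}\,u,\sqrt{2}\,v)$, which is precisely why the nodes enter as $\sqrt{2}\,x_m$ and $\sqrt{2}\,x_n$ in the final expression.

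Next I would apply (\ref{eqn:gauss_hermite}) separately to the $u$- and $v$-integrals. Because the weight $e^{-u^2}e^{-v^2}$ separates, the double integral $\int\!\int e^{-u^2}e^{-v^2}\, g(\sqrt{2}\,u,\sqrt{2}\,v)\,du\,dv$, with $g$ denoting the product over the $N$ ports, is approximated by the tensor-product sum $\sum_{m=1}^M\sum_{n=1}^M w_m w_n\, g(\sqrt{2}\,x_m,\sqrt{2}\,x_n)$. Reattaching the $\frac{1}{\pi}$ prefactor reproduces (\ref{eqn:quadrature_outage}) verbatim.

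The step I expect to be the main (though mild) obstacle is the bookkeeping of the constant: it is easy to misplace the Jacobian factor of $2$ or the $\frac{1}{2\pi}$ and arrive at a wrong prefactor, so I would verify it by normalisation (setting the integrand to $1$ and checking that the sum collapses to the correct total mass). A secondary point worth a remark is that (\ref{eqn:quadrature_outage}) is an approximation rather than an identity; its accuracy rests on $1-Q_1(\cdot,\cdot)$ being smooth and bounded in $(x,y)$, so that the product over ports is well captured by a polynomial of modest degree and the $M$-point rule converges rapidly. One could comment on how large $M$ must be taken for a target accuracy, but no further analytic argument is required.
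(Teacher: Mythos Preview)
Your proposal is correct and mirrors the paper's own argument essentially step for step: rescale $x=\sqrt{2}\,u$, $y=\sqrt{2}\,v$ to convert the weight $\tfrac{1}{2\pi}e^{-(x^2+y^2)/2}$ into $\tfrac{1}{\pi}e^{-(u^2+v^2)}$, then apply the tensor-product Gauss--Hermite rule in each variable. Your additional remarks on verifying the prefactor by normalisation and on the smoothness needed for rapid convergence are helpful commentary but go beyond what the paper records.
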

 \begin{proof}
 	Please refer to Appendix \ref{appendix:quadrature_outage}.
 \end{proof}
\subsection{Ergodic Capacity Expressions}
\begin{lemma}
	Given the definition {of} ergodic capacity in (\ref{eqn:capacity}), by substituting the outage probability expression (\ref{eqn:special_outage_probability_exact}) {into (\ref{eqn:capacity})} the exact ergodic capacity of a FAS-OTFS system considering a single LoS path can be obtained, {see} (\ref{eqn:special_capacity_exact}) (top of next page).
	
	\begin{figure*}
		\vspace{-3mm}
		\begin{align}\label{eqn:special_capacity_exact}
			C = \frac{1}{\ln(2)} \int_{-\infty}^\infty \int_{-\infty}^\infty p_{(x, y)}(x, y) \left[ \int_0^\infty \frac{\prod_{k=1}^N Q_1\left( \frac{|c_k(x, y)|}{\sigma}, \frac{\sqrt{\gamma}}{\sigma} \right)}{1 + \gamma} \, d\gamma \right] dx \, dy.
		\end{align}
		\vspace{-2mm}
		\hrule
		\vspace{-3mm}
	\end{figure*}
\end{lemma}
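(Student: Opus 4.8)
The plan is to avoid differentiating the double integral in (\ref{eqn:special_outage_probability_exact}) to recover $f_{\mathrm{opt}}$, and instead route the derivation through the complementary-CDF representation of capacity already established for this system, namely the identity $C_{\mathrm{erg}} = \frac{1}{\ln 2}\int_0^{\infty}\frac{1 - P_{\mathrm{out}}(\gamma)}{1+\gamma}\,d\gamma$ in (\ref{eqn:exact_capacity}). This is the natural entry point because it replaces the PDF (which would be an unwieldy $\gamma$-derivative of a product of Marcum-Q terms) by the outage probability itself, for which we already have the explicit expression (\ref{eqn:special_outage_probability_exact}).

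The next step is to reconcile the integration variable. In (\ref{eqn:capacity}) the argument $\gamma$ is the optimal SNR, $\gamma = \mathrm{SNR}_{\mathrm{opt}} = \max_k |h_k|^2$, whereas (\ref{eqn:special_outage_probability_exact}) is written with a threshold $\gamma_{\mathrm{th}}$ acting on $\max_k |h_k|$. Since $\{\max_k |h_k|^2 \le \gamma\} = \{\max_k |h_k| \le \sqrt{\gamma}\}$, I would set $\gamma_{\mathrm{th}} = \sqrt{\gamma}$, which turns every second Marcum-Q argument $\gamma_{\mathrm{th}}/\sigma$ into $\sqrt{\gamma}/\sigma$ and expresses $P_{\mathrm{out}}$ directly as a function of the SNR; this is exactly the $\sqrt{\gamma}/\sigma$ that appears inside the bracket of (\ref{eqn:special_capacity_exact}).

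I would then substitute this explicit $P_{\mathrm{out}}(\gamma)$ into (\ref{eqn:exact_capacity}) and interchange the $\gamma$-integration with the $(x,y)$-integration over the Gaussian weight (\ref{eqn:pdf_x0_y0}). Because that weight integrates to unity, the constant term in $1 - P_{\mathrm{out}}$ can be folded under the $(x,y)$-integral, so the inner $\gamma$-integral collects precisely the conditional complementary CDF of $\mathrm{SNR}_{\mathrm{opt}}$ given $(x_0,y_0)$ -- the complement of the factorized product of per-port Rician CDFs of (\ref{eqn:special_outage_probability_exact}), written through the Marcum-Q terms of (\ref{eqn:special_capacity_exact}). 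The factorization itself is inherited from the conditioning argument: given $(x_0,y_0)$ the variables $|h_1|,\dots,|h_N|$ are independent Rician, so their joint CDF is a product, and only the common $(x_0,y_0)$ couples the ports after averaging.

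The main obstacle is the rigorous justification of this order swap together with convergence of the inner $\gamma$-integral as $\gamma \to \infty$: the factor $\frac{1}{1+\gamma}$ decays only like $1/\gamma$, so integrability hinges on the Marcum-Q survival terms decaying fast enough. For large $\gamma$ each $Q_1(\,\cdot\,,\sqrt{\gamma}/\sigma)$ decays like $e^{-\gamma/(2\sigma^2)}$ up to polynomial factors, which makes the bracketed integrand exponentially small and secures absolute convergence; Tonelli's theorem then applies since the integrand is nonnegative on $(0,\infty)\times\mathbb{R}^2$, legitimizing the interchange. Everything after that is bookkeeping, and reassembling the averaged complementary CDF with the density (\ref{eqn:pdf_x0_y0}) yields (\ref{eqn:special_capacity_exact}).
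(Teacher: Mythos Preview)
Your route through the complementary-CDF identity (\ref{eqn:exact_capacity}) is precisely what the paper intends; the Lemma's one-line justification (``substitute (\ref{eqn:special_outage_probability_exact}) into (\ref{eqn:capacity})'') implicitly rests on the same integration-by-parts step that produced (\ref{eqn:exact_capacity}), and your treatment of the threshold substitution $\gamma_{\mathrm{th}}\mapsto\sqrt{\gamma}$, the Tonelli interchange, and the exponential tail of $Q_1$ are all sound refinements the paper simply omits.

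There is, however, a genuine gap at the final identification. After folding the constant $1$ under the Gaussian weight and swapping the order of integration, the numerator of the inner $\gamma$-integrand is the conditional survival probability of $\mathrm{SNR}_{\mathrm{opt}}=\max_k|h_k|^2$ given $(x_0,y_0)$, namely
\[
1-\prod_{k=1}^{N}\Bigl[1-Q_1\!\Bigl(\tfrac{|c_k(x,y)|}{\sigma},\tfrac{\sqrt{\gamma}}{\sigma}\Bigr)\Bigr],
\]
whereas (\ref{eqn:special_capacity_exact}) displays $\prod_{k=1}^{N} Q_1(\cdot,\cdot)$. These two expressions coincide only when $N=1$; for $N\ge 2$ the product of Marcum-$Q$ tails equals $\Pr(\min_k|h_k|^2>\gamma\mid x_0,y_0)$, not $\Pr(\max_k|h_k|^2>\gamma\mid x_0,y_0)$. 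Your phrase ``written through the Marcum-Q terms of (\ref{eqn:special_capacity_exact})'' glosses over this mismatch. Either (\ref{eqn:special_capacity_exact}) carries a typographical slip and should read $1-\prod_k(1-Q_1)$ in the numerator, or an additional argument is required; as written, your derivation does not land on the displayed formula.
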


However, this expression once again requires numerical evaluation. Therefore to find a closed-form approximation, {once again} the Gauss-Hermite quadrature is applied. The Gauss-Laguerre quadrature \cite{abramowitz1972handbook} is also applied due to the presence of integrals that are not bounded by $[-\infty, \infty]$.
 
\begin{theorem}
	The approximation of the closed form solution of the ergodic capacity via the Gauss-Hermite and Gauss-Laguerre quadratures is (\ref{eqn:quadrature_capacity}) (top of page 10).
	\begin{figure*}
		\vspace{-3mm}
		\begin{align}\label{eqn:quadrature_capacity}
			C \approx \frac{2}{\ln(2)} \sum_{m=1}^M \sum_{n=1}^M w_m w_n \left[ \sum_{l=1}^L w_l \prod_{k=1}^N Q_1\left( \frac{|c_k(\sqrt{2} x_m, \sqrt{2} x_n)|}{\sigma}, \frac{\sqrt{t_l}}{\sigma} \right) e^{t_l - \ln(1 + t_l)} \right].
		\end{align}
		\vspace{-2mm}
		\hrule
		\vspace{-3mm}
	\end{figure*}
\end{theorem}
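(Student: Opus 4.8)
The plan is to obtain (\ref{eqn:quadrature_capacity}) directly from the exact expression (\ref{eqn:special_capacity_exact}) by discretising each of its three nested integrals with the appropriate Gaussian quadrature rule: the two outer Gaussian-weighted integrals over $x$ and $y$ with the Gauss--Hermite rule (\ref{eqn:gauss_hermite}), and the inner semi-infinite integral over $\gamma$ with the Gauss--Laguerre rule. Because (\ref{eqn:special_capacity_exact}) is already written in the nested form $\int\!\int p_{(x,y)}(x,y)\,[\int_0^\infty(\cdots)\,d\gamma]\,dx\,dy$, the two stages act on disjoint variables and may be applied in either order; I would treat the inner $\gamma$-integral first so that the outer quadrature then sees a single scalar function of $(x,y)$.

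First I would recast the inner integral into Gauss--Laguerre form. That rule requires an $e^{-\gamma}$ weight on $[0,\infty)$, so I insert $e^{-\gamma}e^{\gamma}\equiv 1$ and isolate the non-exponential part:
\begin{align}
\int_0^\infty \frac{\prod_{k=1}^N Q_1\!\bigl(\tfrac{|c_k|}{\sigma},\tfrac{\sqrt{\gamma}}{\sigma}\bigr)}{1+\gamma}\,d\gamma
= \int_0^\infty e^{-\gamma}\,\frac{e^{\gamma}\prod_{k=1}^N Q_1\!\bigl(\tfrac{|c_k|}{\sigma},\tfrac{\sqrt{\gamma}}{\sigma}\bigr)}{1+\gamma}\,d\gamma .
\end{align}
Evaluating the bracketed smooth function at the Laguerre nodes $t_l$ and weighting by $w_l$ yields the inner sum, where the residual factor collapses via $\tfrac{e^{t_l}}{1+t_l}=e^{t_l-\ln(1+t_l)}$, reproducing the $e^{t_l-\ln(1+t_l)}$ term together with the Marcum-$Q$ product evaluated at $\sqrt{t_l}/\sigma$.

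Next I would discretise the outer Gaussian integrals. Since $p_{(x,y)}(x,y)=\tfrac{1}{2\pi}e^{-(x^2+y^2)/2}$ carries the kernel $e^{-t^2/2}$ rather than the $e^{-t^2}$ kernel of (\ref{eqn:gauss_hermite}), I apply the change of variables $x=\sqrt{2}\,u$, $y=\sqrt{2}\,v$, which converts the exponent to $-(u^2+v^2)$, introduces a Jacobian factor of $2$, and shifts the evaluation points to the scaled nodes $\sqrt{2}\,x_m$, $\sqrt{2}\,x_n$ (exactly as in the outage result (\ref{eqn:quadrature_outage})). Applying the one-dimensional Gauss--Hermite rule in each of $u$ and $v$ then produces the double sum $\sum_m\sum_n w_m w_n(\cdots)$. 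The leading constant of (\ref{eqn:quadrature_capacity}) arises by consolidating the $1/\ln 2$ prefactor of (\ref{eqn:special_capacity_exact}), the Gaussian normalisation $1/(2\pi)$, and this Jacobian $2$; substituting the inner Gauss--Laguerre approximation at the scaled arguments $c_k(\sqrt{2}x_m,\sqrt{2}x_n)$ then assembles the full triple sum.

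The main obstacle I anticipate is the correct handling of the semi-infinite $\gamma$-integral: unlike a finite or purely Gaussian-weighted integral, it carries no natural exponential weight, so the step of \emph{manufacturing} the $e^{-\gamma}$ kernel and absorbing the slowly varying factor $1/(1+\gamma)$ into $e^{t_l-\ln(1+t_l)}$ is where the argument must be made with care. One should verify that the residual integrand $e^{\gamma}\prod_k Q_1(\cdot,\sqrt{\gamma}/\sigma)/(1+\gamma)$ stays smooth and does not grow exponentially, which holds because each Marcum-$Q$ factor decays faster than $e^{-\gamma}$ for large $\gamma$ while $1/(1+\gamma)$ remains bounded, thereby guaranteeing convergence of the Gauss--Laguerre approximation. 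The remaining bookkeeping---the $\sqrt{2}$ node scaling and the consolidation of the leading constant---is routine once this control is established.
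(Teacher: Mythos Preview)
Your proposal is correct and follows essentially the same route as the paper's proof: both apply the $x=\sqrt{2}u$, $y=\sqrt{2}v$ substitution to match the Gauss--Hermite kernel on the outer double integral, and both manufacture the $e^{-\gamma}$ weight for Gauss--Laguerre on the inner integral by absorbing $1/(1+\gamma)$ into $e^{\gamma-\ln(1+\gamma)}$. The only cosmetic difference is that you treat the inner integral first while the paper handles the outer integral first, and you add a brief convergence justification for the Laguerre step that the paper omits.
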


\begin{proof}
	Please refer to Appendix \ref{appendix:quadrature_capacity}.
\end{proof}

\section{Numerical Results and Discussion}\label{sec:results}
\subsection{Outage Probability Model for MRC-OTFS Benchmark}
To benchmark the performance of our proposed FAS-OTFS system, we adopt a Maximal Ratio Combining (MRC) receiver for OTFS and derive its outage probability. Following the outage probability framework in \cite{zhang2024outage}, we extend the model to an MRC diversity system. The post-combining signal-to-noise ratio (SNR) is given by:%
\vspace{-2mm}
\begin{equation}
	\gamma_{\text{MRC}} = \frac{E_s}{N_0} \sum_{n=1}^{N_r} \|\mathbf{H}_n\|^2
\end{equation}
\noindent where $E_s$ is the symbol energy, $N_0$ is the noise power spectral density, and $\|\mathbf{H}_n\|^2$ represents the channel gain for the $n$-th receive antenna. The outage probability is then expressed as: %
\begin{align}\label{eqn:MRC}
	P_{\text{out}} &= \Pr \left( \gamma_{\text{MRC}} < \gamma_{\text{th}} \right) = F_{\chi^2_{2N_r}} \left( \frac{\gamma_{\text{th}} N_0}{E_s} \right)
\end{align}
\noindent where $\gamma_{\text{th}} = 2^R - 1$ is the SNR threshold for a target rate $R$, and $F_{\chi^2_{2N_r}}(\cdot)$ denotes the cumulative distribution function (CDF) of a chi-squared variable with $2N_r$ degrees of freedom.
%
\vspace{-5mm}
\subsection{Refinement of the Outage Probability Lower Bound}
The lower bound on the outage probability provides a worst-case estimate {of the} system's reliability for the general model. This bound was derived under the {idealized} assumption of fully correlated ports, all ports sharing identical fading conditions. {However}, this bound is often overly conservative, leading to a significant gap compared to numerical results, as it does not account for {realistic} correlation effects. {Indeed}, during the simulations it was noted that this lower bound was so low that it provided no insight. {Hence it was excluded from} the figures. Therefore, refinement was required.

To refine this bound, an effective number $N_{\mathrm{eff}}$ of independent ports are introduced, which interpolates between the fully correlated and fully independent cases:%
\begin{align}
	N_{\mathrm{eff}} = 1 + (N - 1) (1 - \rho_{\mathrm{avg}}),
\end{align}

\noindent where $\rho_{\mathrm{avg}}$ is the average correlation coefficient between ports. This parameter governs the correlation level:
\begin{itemize}
	\item If $\rho_{\mathrm{avg}} = 1$, then $N_{\mathrm{eff}} = 1$, {representing} the original fully correlated bound.
	\item If $\rho_{\mathrm{avg}} = 0$, then $N_{\mathrm{eff}} = N$, approaching the independent case.
\end{itemize}

The refined lower bound is then expressed as:%
\begin{align}
	P_{\mathrm{out}}^{(\mathrm{lower, improved})} = F_S(\gamma_{\mathrm{th}})^{N_{\mathrm{eff}}} = \left( \frac{\gamma\big(\alpha, \frac{\gamma_{\mathrm{th}}}{\theta} \big)}{\Gamma(\alpha)} \right)^{N_{\mathrm{eff}}}.
\end{align}

This formulation maintains theoretical validity, while significantly improving approximation accuracy. This revised bound is used in all lower bound plots for the general model.
\vspace{-2mm}
\subsection{General Model}
%
%
%
\begin{figure}[!t]
	\centering
	\includegraphics[width=3in]{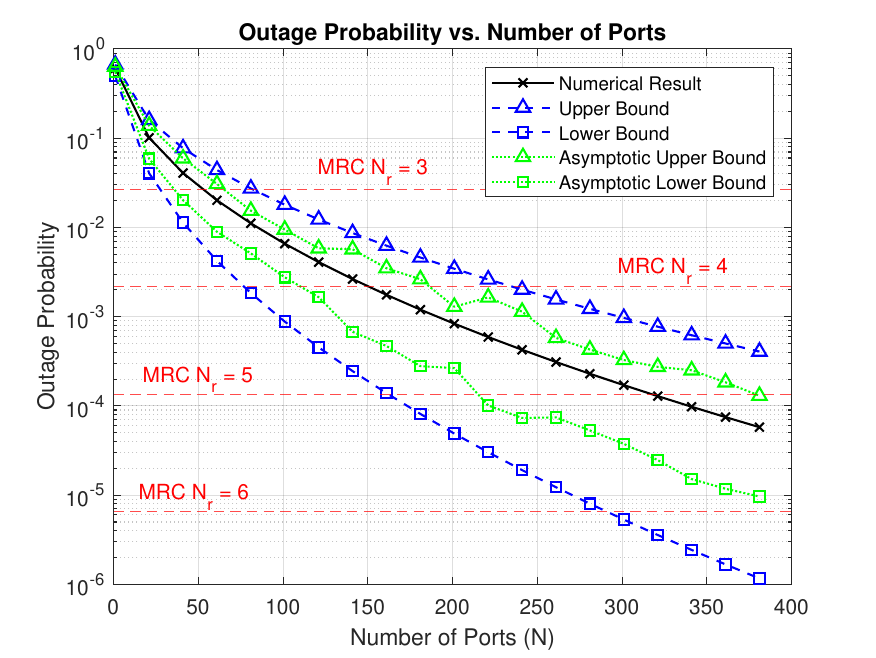}
	\vspace{-2mm}
	\caption{Outage probability versus the number of FAS ports ($N$) for the FAS-OTFS system, including numerical results and asymptotic bounds.}
	\vspace{-3mm}
	\label{fig:OPvN_general}
\end{figure}
%

Figure \ref{fig:OPvN_general} illustrates the impact of increasing the number $N$ of FAS ports on the outage probability in an OTFS-based system. As expected, increasing $N$ significantly reduces the outage probability, confirming the diversity gain provided by multiple ports. The numerical results closely follow the theoretical bounds, validating the {assumptions in the} analytical framework. The asymptotic upper and lower bounds provide additional insight into the performance limits, particularly at high $N$, where the outage probability approaches a floor dictated by system constraints. {Comparing with the MRC threshold in (\ref{eqn:MRC}), it can be observed that the general model outperforms a $N_r = 4$ antenna MRC system given a sufficiently large number of ports. Given the assumption in the FAS-OTFS model of an antenna size of $1\lambda$ during simulations, FAS-OTFS not only has a better outage performance but also is more energy and space efficient than similar MRC systems.}
\begin{figure}[!t]
	\centering
	\includegraphics[width=3in]{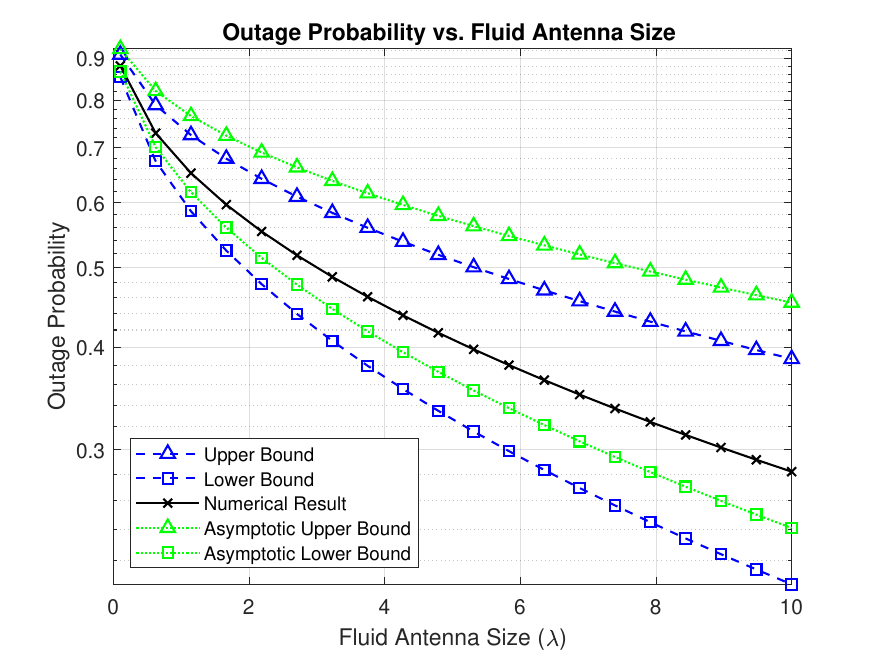}
	\vspace{-2mm}
	\caption{Outage probability versus the fluid antenna size ($W$) for the FAS-OTFS system, including numerical results and asymptotic bounds.}
	\vspace{-3mm}
	\label{fig:OPvW_general}
\end{figure}

In contrast to Figure \ref{fig:OPvN_general}, Figure \ref{fig:OPvW_general} examines how the FAS size $W$ influences the outage probability. A larger $W$ improves performance by offering {higher} spatial diversity and allowing the system to select more favourable channel conditions. However, diminishing returns can be observed at large $W$, where additional increases yield marginal improvements. The asymptotic bounds capture this behaviour well, demonstrating their utility in approximating the system performance across different spatial configurations.
\begin{figure}[!t]
	\centering
	\includegraphics[width=3in]{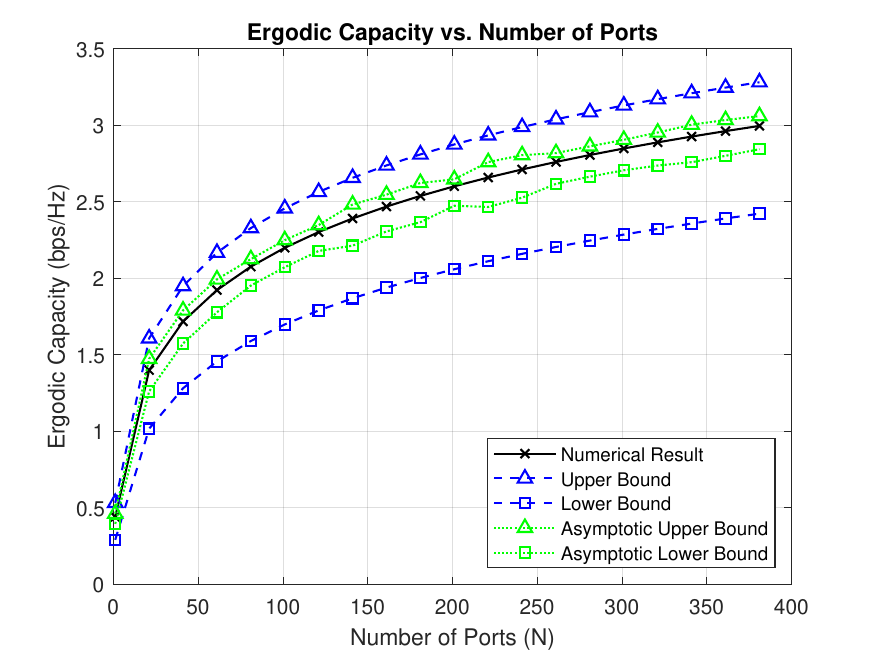}
	\vspace{-2mm}
	\caption{Ergodic capacity versus the number of FAS ports ($N$) for the FAS-OTFS system, including numerical results and asymptotic bounds.}
	\vspace{-3mm}
	\label{fig:ECvN_general}
\end{figure}

Figure \ref{fig:ECvN_general} highlights the ergodic capacity improvements as the number of ports $N$ increases. The capacity increases rapidly for small $N$ due to the significant diversity gain, but {its gradient gradually reduces} at higher $N$, indicating {the impact} imposed by channel conditions and system resources. The lower bound provides a conservative estimate, while the upper bound remains sufficiently tight, reinforcing the robustness of proposed analytical framework. This emphasizes the performance vs. complexity trade-off in practical implementations.
\begin{figure}[!t]
	\centering
	\includegraphics[width=3in]{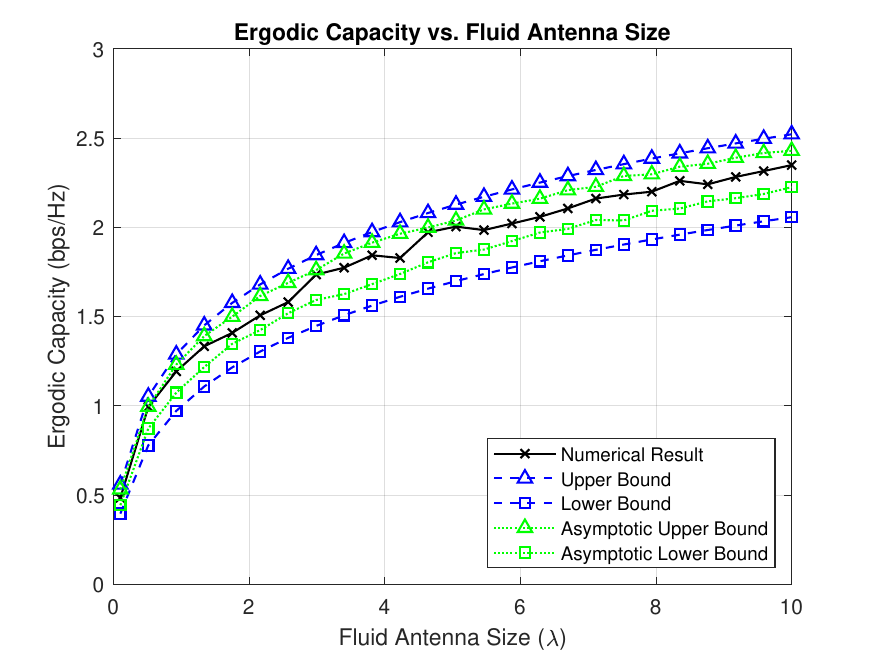}
	\vspace{-2mm}
	\caption{Ergodic capacity versus the fluid antenna size ($W$) for the FAS-OTFS system, including numerical results and asymptotic bounds.}
	\vspace{-3mm}
	\label{fig:ECvW_general}
\end{figure}

{In contrast} to Figure \ref{fig:ECvN_general}, Figure \ref{fig:ECvW_general} explores how the ergodic capacity scales with the fluid antenna size $W$. Increasing $W$ leads to improved spectral efficiency by leveraging better channel conditions, yet a saturation effect is visible beyond a certain $W$, emphasizing practical deployment limitations. The asymptotic bounds provide a useful approximation, particularly at larger $W$. This figure reinforces the importance of {carefully} balancing antenna size and system complexity to achieve the capacity gains desired.
\vspace{-2mm}
\subsection{Single Path Scenario}
\begin{figure}[!t]
\centering
\includegraphics[width=3in]{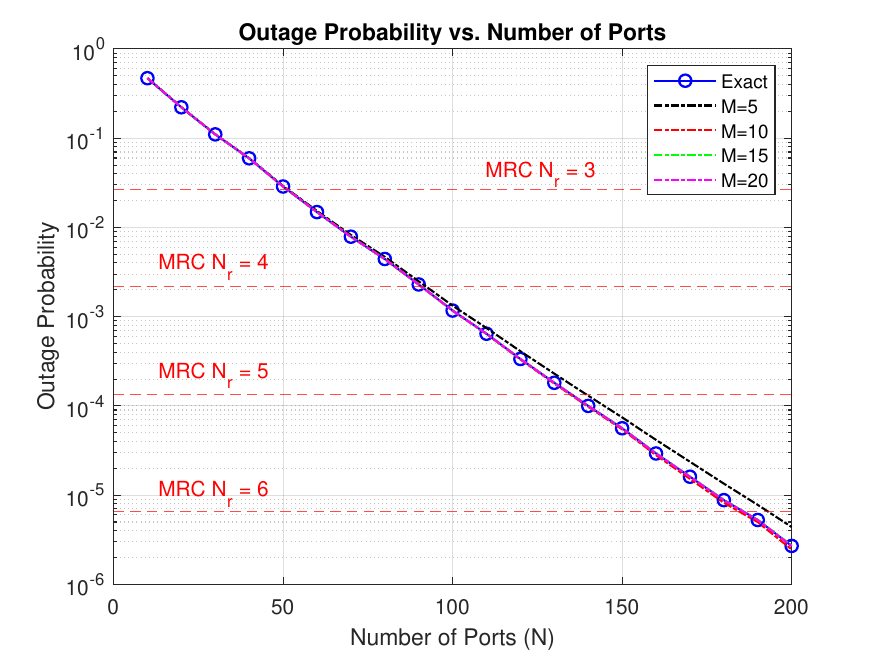}
\vspace{-2mm}
\caption{Outage probability versus the total number of ports $N$ for the special single-path FAS-OTFS scenario. Theoretical results (Exact) are compared with different values of $M$, the precision of the Gaussian quadrature.}
\vspace{-3mm}
\label{fig:OPvsN_special}
\end{figure}

Figure \ref{fig:OPvsN_special} illustrates the outage probability of the FAS-OTFS system under a single-path fading scenario, plotted against the number of FAS ports $N$. The exact analytical results are depicted by blue circles, while numerical evaluations using Gaussian quadratures with different precision levels $(M=5,10,15,20)$ are represented by dashed lines. The results indicate a steep {reduction} in outage probability as $N$ increases, showcasing the reliability improvements enabled by increasing the number of FAS ports. Moreover, the close alignment between the exact solution and numerical evaluations validates the outage probability expressions derived and confirms the accuracy of the Gaussian quadrature approach. {Furthermore, the single-path assumption has even better performance compared to the benchmark MRC scheme compared to the general model in Figure \ref{fig:OPvN_general}, with the single-path model outperforming even $N_r = 6$ MRC.} These findings reinforce the effectiveness of FAS-OTFS for energy-efficient and reliable IoT communications in high-mobility satellite environments.
\begin{figure}[!t]
	\centering
	\includegraphics[width=3in]{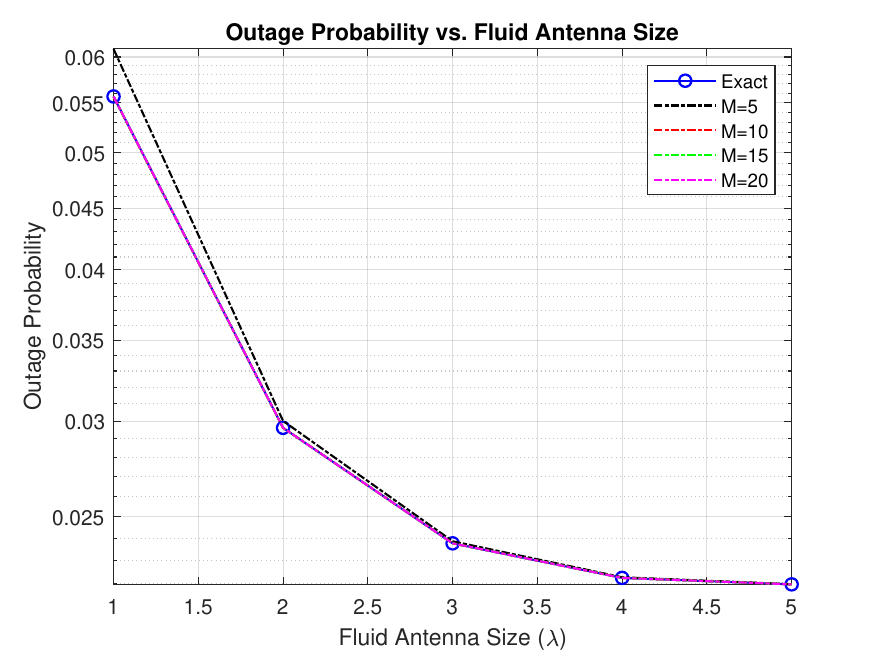}
	\vspace{-2mm}
	\caption{Outage probability versus the FAS size $W$ in terms of wavelength $\lambda$ for the special single-path FAS-OTFS scenario. Theoretical results (Exact) are compared with numerical evaluations using Gaussian quadrature with different precision levels ($M = 5, 10, 15, 20$).}
	\vspace{-3mm}
	\label{fig:OPvsW_special}
\end{figure}

Figure \ref{fig:OPvsW_special} illustrates the relationship between the FAS size $W$, measured in terms of wavelength $\lambda$ and the outage probability of the FAS-OTFS system in a special single-path fading scenario. The exact analytical results are represented by blue circles, while {our} numerical evaluations using Gaussian quadratures with different precision levels ($M = 5, 10, 15, 20$) are shown with dashed lines. The results demonstrate that as the FAS size increases, the outage probability significantly decreases, indicating improved system reliability for larger spatial apertures. Furthermore, the close match between the exact results and numerical evaluations validates the accuracy of the outage probability expressions derived. These findings emphasize the importance of FAS size optimization in enhancing the performance of FAS-OTFS for energy-efficient, high-mobility satellite IoT communications.
\begin{figure}[!t]
	\centering
	\includegraphics[width=3in]{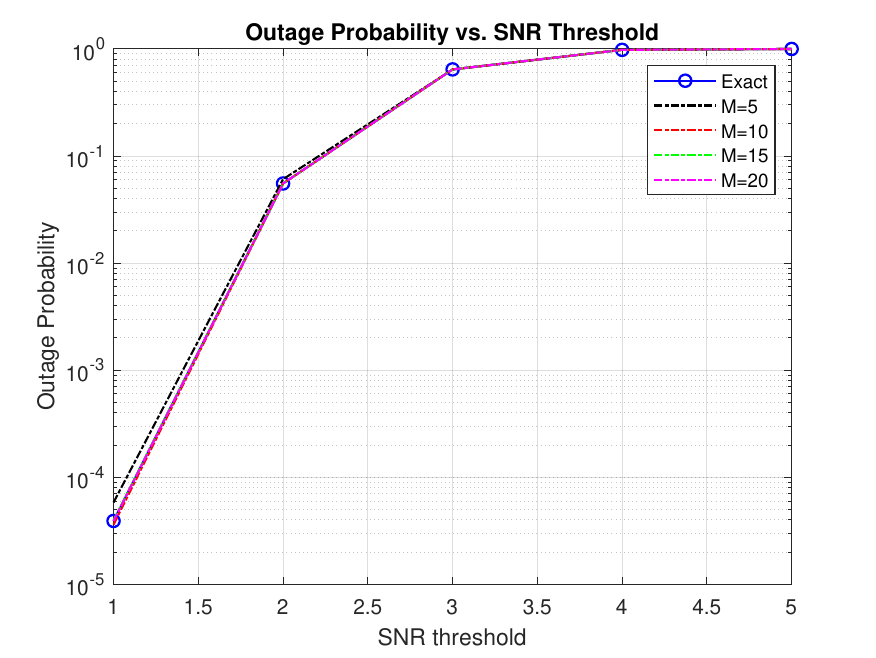}
	\vspace{-2mm}
	\caption{Outage probability versus the signal-to-noise ratio (SNR) threshold for the special single-path FAS-OTFS scenario. Theoretical results (Exact) are compared with numerical evaluations using Gaussian quadrature with different precision levels ($M = 5, 10, 15, 20$).}
	\vspace{-2mm}
	\label{fig:OPvsSNRth_special}
\end{figure}

This figure depicts the effect of the SNR threshold on the outage probability of the FAS-OTFS system in a {specific} single-path fading scenario. The exact analytical results are shown by blue circles, while {our} numerical evaluations using Gaussian quadrature with varying precision levels ($M = 5, 10, 15, 20$) are plotted with dashed lines. The results reveal that the outage probability increases steeply as the SNR threshold increases, indicating the growing difficulty in maintaining reliable communication {in the face of} higher SNR requirements. Additionally, the strong agreement between the exact and numerical results confirms the accuracy of derived probability expressions. These highlight the trade-off between reliability and the performance thresholds of FAS-OTFS systems for energy-efficient satellite IoT communications.
\begin{figure}[!t]
	\centering
	\includegraphics[width=3in]{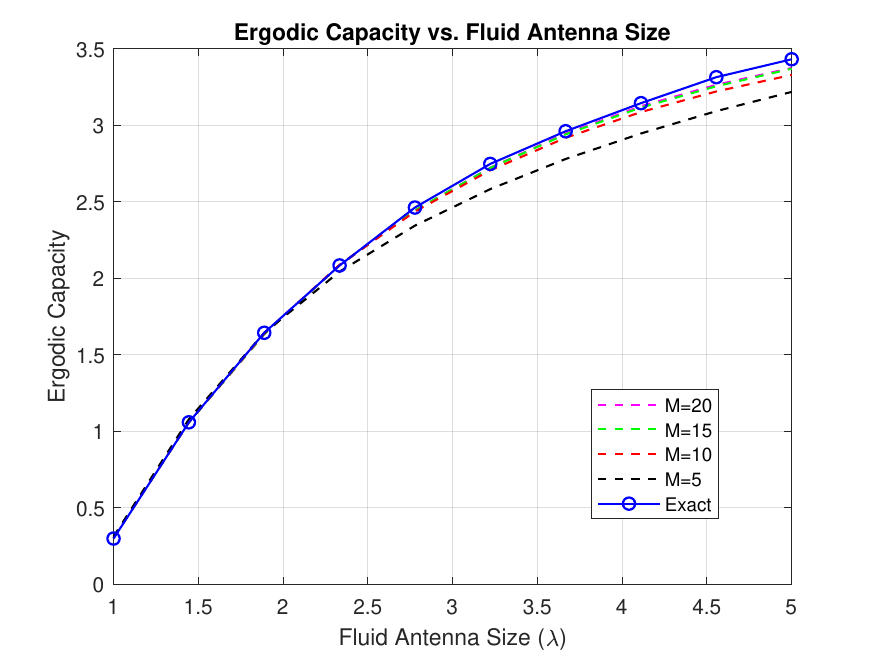}
	\vspace{-2mm}
	\caption{Ergodic capacity versus fluid antenna size $W$ (in $\lambda$), showing exact values (blue) and approximations using Gauss-Laguerre and Gauss-Hermite quadrature for different $M$, the precision of the quadratures.}
	\vspace{-2mm}
	\label{fig:ECvW_special}
\end{figure}

Figure \ref{fig:ECvW_special} illustrates the relationship between ergodic capacity and the fluid antenna size $W$, demonstrating that as $W$ increases, the capacity improves significantly. The exact values are closely followed by approximations using Gauss-Laguerre and Gauss-Hermite quadratures with varying $M$, where a higher $M$ provides better accuracy. This trend indicates that larger fluid antenna sizes enhance the signal diversity and improve capacity, with diminishing gains at higher $W$. It can be observed that for a larger FAS, the approximations become less accurate. However, due to space constraints on IoTDs, the size of the FAS would normally be $\leq 5\lambda$ and this is not a significant issue.
\begin{figure}[!t]
	\centering
	\includegraphics[width=3in]{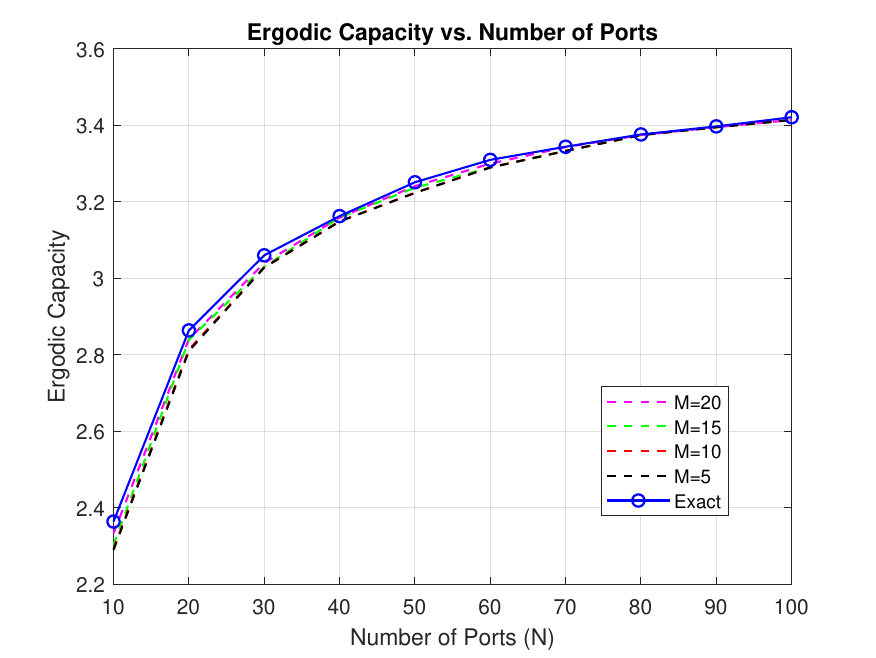}
	\vspace{-2mm}
	\caption{Ergodic capacity versus number of ports $N$, comparing exact values (blue) and quadrature approximations for different $M$.}
	\vspace{-2mm}
	\label{fig:ECvN_special}
\end{figure}

Figure \ref{fig:ECvN_special} presents the ergodic capacity as a function of the number of ports $N$, showing that the capacity increases rapidly for small $N$, before {exhibiting a gradually} saturating {tendency} at higher values. The approximations with different $M$ align well with the exact values, confirming their accuracy. {Again}, increasing $N$ improves the capacity, but the benefits {gradually} become marginal beyond a certain point, highlighting a {beneficial} range for practical system design.

%
\section{Conclusion}\label{sec:conclude}
{We conceived a} novel FAS-OTFS framework for enhancing {both the} energy and computational efficiency of satellite-based IoT communications in high-mobility environments. By integrating FAS with OTFS modulation, we addressed the computational limitations of IoT devices while maintaining robust performance under severe delay-Doppler impairments. We derived analytical expressions for {both the} outage probability and ergodic capacity under a general FAS-OTFS channel, with closed-form solutions obtained for a special single-path scenario. {Our} numerical results validate the effectiveness of FAS-OTFS, demonstrating significant improvements both in reliability and capacity over conventional OTFS. These findings establish FAS-OTFS as a promising solution for energy-efficient, high-mobility IoT communications over satellite links. Future research may explore {joint channel and data estimation} {relying on bespoke channel coding and turbo detection for approaching our performance limits derived}.


\appendices
\vspace{-3mm}
\section{Derivation of the Second Moment without correlation}
\label{appendix:gamma_parameters}

The second moment of the power $\lvert h_k \rvert^2$ for a multi-cluster Rician channel is derived, approximating the summation as a single Rician variable upon assuming uncorrelated diffuse components across clusters.
\vspace{-5mm}
\subsection{Combining Means and Variances (Approximation)}

\subsubsection*{Total Deterministic Component (Mean)}
The total specular component (mean) across all clusters is:%
\begin{align}
	M \triangleq \sum_{p=1}^P \alpha_p \mu_{p,k}.
\end{align}
\subsubsection*{ Total Diffuse Variance}
Assume that \(\{\tilde{c}_{p,k}\}_{p=1}^P\) are uncorrelated across different clusters. Then the sum of diffuse components becomes:%
\vspace{-2mm}
\begin{align}
Y \triangleq \sum_{p=1}^P \alpha_p \tilde{c}_{p,k},
\end{align}
which is a zero-mean complex Gaussian variable:%
\begin{align}
Y \sim \mathcal{CN}\bigl(0, \Sigma^2\bigr),
\quad
\Sigma^2 = \sum_{p=1}^P \alpha_p^2 \sigma_{p,k}^2.
\end{align}
\subsubsection*{Rician Approximation}
The channel can now be approximated as $h_k = M + Y$, where $M$ is deterministic, and $Y \sim \mathcal{CN}(0, \Sigma^2)$. Thus, $h_k$ is a Rician random variable, and $\lvert h_k\rvert^2$ is the power of a Rician random variable.
\vspace{-5mm}
\subsection{ Known Fourth Moment of a Rician Random Variable}
For a Rician random variable $X = m + Z$, where $Z \sim \mathcal{CN}(0, \sigma^2)$, the fourth moment of $\lvert X \rvert^2$ is:%
\begin{align}
	\mathbb{E}\bigl[\lvert X \rvert^4\bigr]	= \lvert m \rvert^4 + 4 \lvert m \rvert^2 \sigma^2 +	2 \sigma^4.
\end{align}

\vspace{-8mm}
\subsection{Applying to $h_k = M + Y$}
For the multi-cluster Rician channel:%
\begin{align} \label{eqn:fourth_moment_values}
	M =	\sum_{p=1}^P \alpha_p \mu_{p,k},
	\quad
	\Sigma^2 =\sum_{p=1}^P \alpha_p^2 \sigma_{p,k}^2.
\end{align}
Substituting (\ref{eqn:fourth_moment_values}) into the Rician fourth-moment formula yields $\mathbb{E}[\lvert h_k \rvert^4]	= \lvert M \rvert^4 + 4 \lvert M \rvert^2 \Sigma^2 + 2 \Sigma^4$. Expanding $M$ and $\Sigma^2$ {leads to}:
\vspace{-5mm}
\begin{align}\label{eqn:second_moment}
	\mathbb{E}[\lvert h_k \rvert^4]	= \Bigl\lvert \sum_{p=1}^P \alpha_p \mu_{p,k} \Bigr\rvert^4 &+ 4 \Bigl\lvert \sum_{p=1}^P \alpha_p \mu_{p,k} \Bigr\rvert^2 \Bigl( \sum_{p=1}^P \alpha_p^2 \sigma_{p,k}^2 \Bigr) \nonumber\\ 
	 &+ 2 \Bigl( \sum_{p=1}^P \alpha_p^2 \sigma_{p,k}^2 \Bigr)^2.
\end{align}
\vspace{-10mm}
\subsection{Obtaining the uncorrelated variance}
By substituting (\ref{eqn:first_moment}) and (\ref{eqn:second_moment}) into the variance $\tilde{\text{Var}}(\lvert h_k \rvert^2) = \mathbb{E}[\lvert h_k \rvert^4] - \mathbb{E}[\lvert h_k \rvert^2]^2$, (\ref{eqn:variance_no_correlation}) is obtained. Substituting the variance into (\ref{eqn:gamma_parameter_def}), simulations of the Gamma approximation and the original channel can be compared to see the accuracy of the approximation. Figure \ref{fig:no_correlation} shows that the moments derived for the non-correlated Gamma approximation are valid.

\begin{figure}[!t]
	\centering
	\includegraphics[width=0.49\columnwidth]{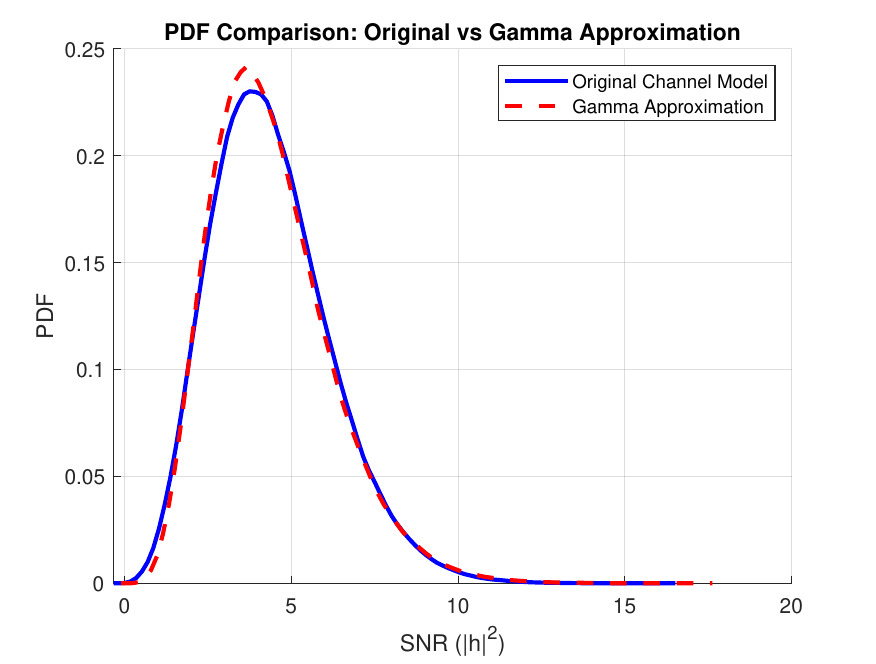}
	\hfill
	\includegraphics[width=0.49\columnwidth]{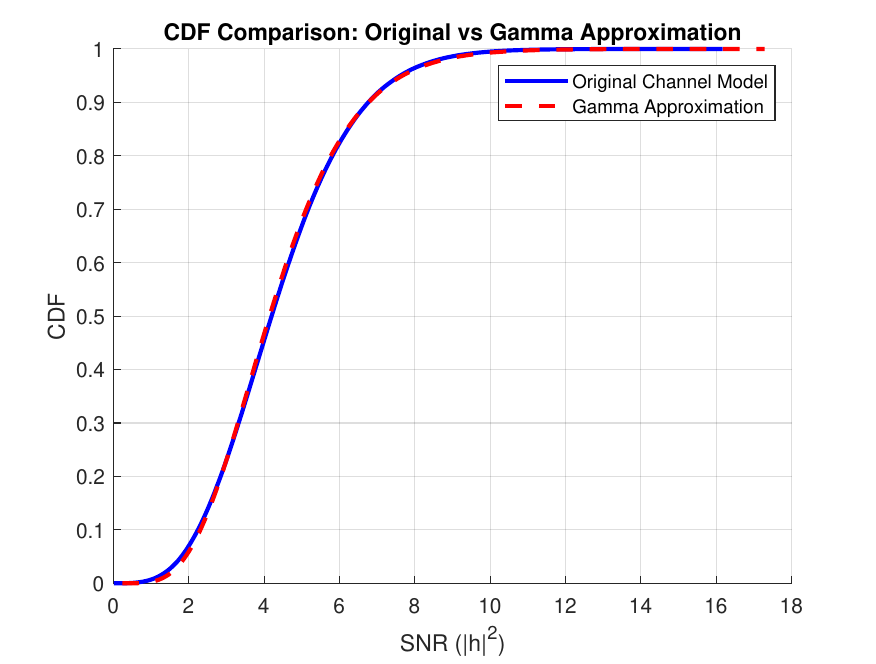}
	\vspace{-5mm}
	\caption{Simulation showing accuracy of Gamma approximation assuming no correlation}
	\vspace{-5mm}
	\label{fig:no_correlation}
\end{figure}

\vspace{-5mm}
\section{Derivation of the Second Moment considering correlation}
\label{appendix:correlation}
Let $|h_k|^2$ denote the signal power received at the $k$-th port in a fluid antenna system (FAS), where $k = 1, 2, \dots, N$. The second moment of the received power distribution is given by: $M_2 = \mathbb{E}\left[ |h_{\text{opt}}|^4 \right]$, where $h_{\text{opt}}$ is the channel coefficient at the port with maximum received power, and $M_2$ accounts for inter-port dependencies under spatial correlation.
%
\vspace{-5mm}
\subsection{Uncorrelated Case: Baseline Second Moment}
\vspace{-2mm}
For the uncorrelated case, the second moment {becomes}:%
\begin{align}\label{eqn:uncorrelated_m2}
	\tilde{M}_2 = \sum_{k=1}^{N} \tilde{M}_{2,k},
\end{align}
\noindent where $\tilde{M}_{2,k} = \mathbb{E}[ |h_k|^4 ]$ represents the uncorrelated second moment at the $k$-th port. When the ports have independent {fading}, the summation is sufficient. However, inter-port dependencies must be accounted for.
\subsection{Introducing Correlation via Covariance}
The correlation between the power levels at {the pair of} ports $i$ and $j$ is captured through their covariance:%
\begin{align}\label{eqn:covariance}
	\text{Cov}(|h_i|^2, |h_j|^2) = J_0\left(\frac{2\pi |i - j| W}{N-1}\right) \operatorname{Var}(|h_k|^2).
\end{align}

Using this, the second moment including correlation is:%
\begin{align}\label{eqn:m2_1}
	M_2 = \sum_{k=1}^{N} \tilde{M}_{2,k} + 2 \sum_{i \neq j} \operatorname{Cov}(|h_i|^2, |h_j|^2).
\end{align}

By substituting (\ref{eqn:covariance}) and (\ref{eqn:uncorrelated_m2}) into (\ref{eqn:m2_1}) {yields}:%
\begin{align}
	M_2 = \tilde{M}_2 + 2 \sum_{i \neq j} J_0\left(\frac{2\pi |i - j| W}{N-1}\right) \operatorname{Var}(|h_k|^2).
\end{align}
\vspace{-5mm}
\subsection{Avoiding Circular Dependence on $M_2$}
Since $\text{Var}(|h_k|^2) = M_2 - M_1^2,$ substituting this into the equation results in a recursive dependency on $M_2$, making it difficult to solve explicitly. Instead, a common approximation technique replaces $\text{Var}(|h_k|^2)$ with $M_1^2$, leading to: %
\begin{align}
	M_2 \approx \tilde{M}_2 + 2 \sum_{i \neq j} J_0\left(\frac{2\pi |i - j| W}{N-1}\right) M_1^2.
\end{align}

This simplification ensures that the second moment remains analytically tractable while still capturing correlation effects.
\vspace{-5mm}
\subsection{Final Expression for the Second Moment}

Thus, the final second moment of the correlated power distribution in a FAS is given by: %
\begin{align}\label{eqn:second_moment_correlated}
	M_2 = \tilde{M}_2 + 2 \sum_{i \neq j} J_0\left(\frac{2\pi |i - j| W}{N-1}\right) M_1^2,
\end{align}

\noindent where the second term introduces inter-port correlation using the Bessel function. Substituting (\ref{eqn:second_moment_correlated}) into (\ref{eqn:variance_formula}) and (\ref{eqn:gamma_parameter_def}), it can be seen in Figure \ref{fig:correlation_bad} that the Gamma approximation with inter-port correlation is highly inaccurate. This is {because} the correlation is overrepresented in the variance. To have a more accurate approximation, the correlation is adjusted and capped. 
\begin{figure}[!t]
	\centering
	\includegraphics[width=0.49\columnwidth]{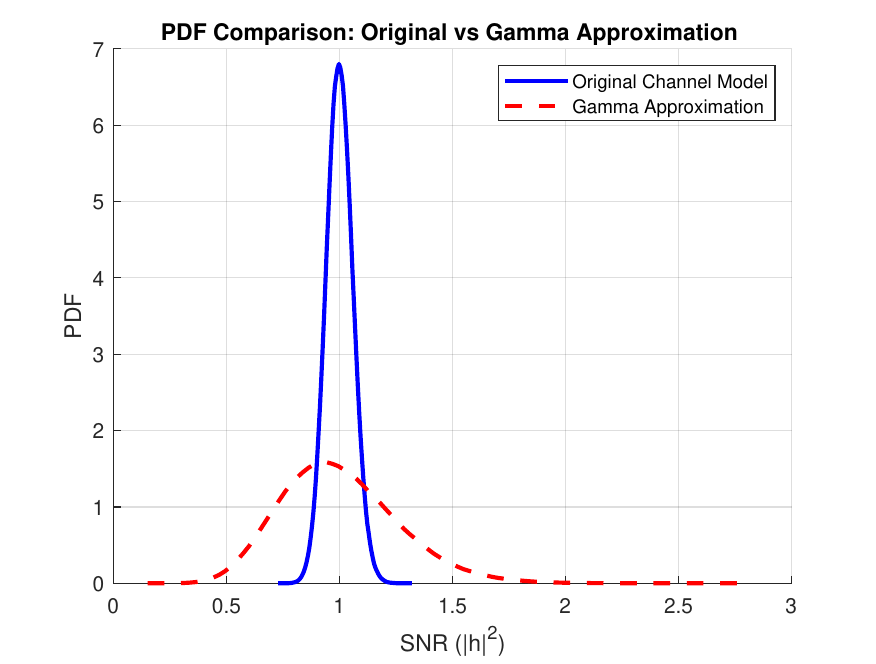}
	\hfill
	\includegraphics[width=0.49\columnwidth]{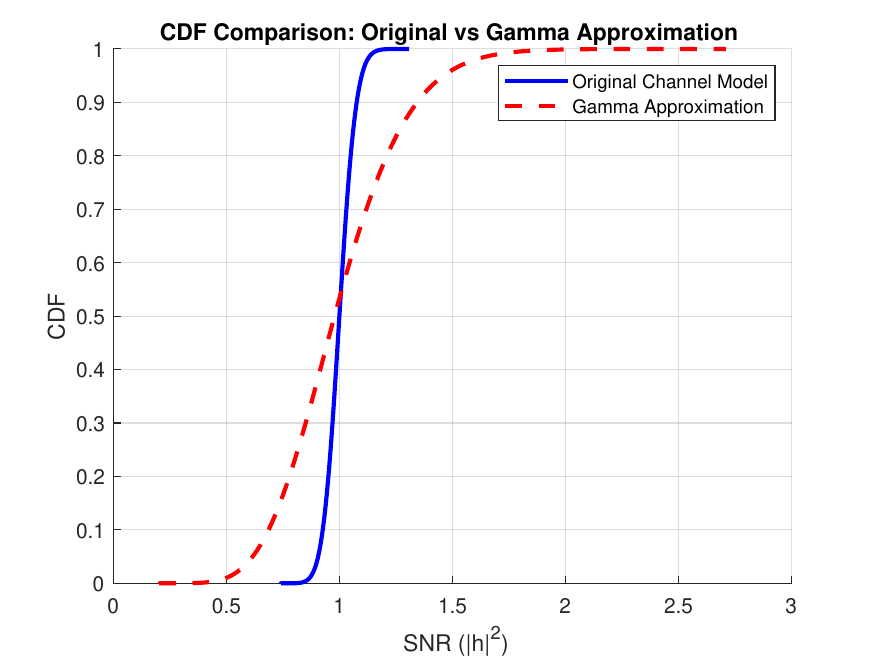}
	\vspace{-5mm}
	\caption{Simulation showing the accuracy of Gamma approximation {in the presence of} correlation.}
	\vspace{-5mm}
	\label{fig:correlation_bad}
\end{figure}
\vspace{-5mm}
\subsection{Adjusted Contribution}


As $N$ increases in (\ref{eqn:second_moment_correlated}),  the summation $ \sum_{i \neq j} $ grows quadratically since it contains $ N(N-1) $ terms. This causes the contribution to scale excessively as $ N $ increases, inflating the corrected variance. Upon dividing the summation by $ N^2 $, the growth of this quadratic term is controlled at large values of $N$, ensuring proportionality to $N$ {rather than to} $N^2$. The adjusted correlation $\bar{\text{Cov}}(|h_i|^2, |h_j|^2)$ {becomes}: %
\begin{align}\label{eqn:adjusted_correlation}
	\bar{\text{Cov}}(|h_i|^2, |h_j|^2) = \frac{1}{N^2} \sum_{i \neq j} J_0\left(\frac{2\pi |i - j| W}{N-1}\right) M_1^2.
\end{align}

With this adjustment, as $ N \to \infty $, the contribution approaches $ 2 M_1^2 $, preventing unbounded growth.


\vspace{-5mm}
\subsection{Capped Contribution}


Since normalization only has a significant impact on large $N$, the adjusted contribution can still dominate the corrected variance, when the ports are highly correlated and $ J_0(\cdot) $ is large, or if $ N $ is moderate, causing the $J_0 $-weighted summations to remain significant. Under these conditions, the corrected variance becomes unrealistically large, skewing the Gamma parameters ($ \alpha, \theta $) and inflating the Gamma approximation. The capped contribution ensures that the correlation effect remains proportional to the system's intrinsic variance: %
\begin{align}\label{eqn:capped_correlation}
	\hat{\text{Cov}}(|h_i|^2, |h_j|^2) = \min\left[\bar{\text{Cov}}, \eta \cdot \operatorname{Var}(|h_k|^2)\right],
\end{align}
where $\hat{\text{Cov}}(|h_i|^2, |h_j|^2)$ is the capped contribution and $\eta \in [0.1, 1]$ is a small constant. By limiting the correlation contribution to a fraction of the original variance, the correlation {will no longer} dominate the corrected variance. Capping avoids extreme {escalation} of $ \theta_{\text{gamma}} $, ensuring a reasonable fit between the Gamma model and the original channel.

To summarise, the adjusted contribution ensures that summation {becomes} proportional to $N$ not to $N^2$ for large numbers of ports, while the contribution cap prevents the exaggerated influence of the covariance term at high correlation.

By substituting (\ref{eqn:adjusted_correlation}) into (\ref{eqn:capped_correlation}), this can replace the previous correlation model in (\ref{eqn:second_moment_correlated}). Finally, this can be substituted into the original variance formula to obtain the variance considering the correlation in (\ref{eqn:variance}). Figure \ref{fig:correlation} verifies the accuracy of this updated correlation model.

\begin{figure}[!t]
	\centering
	\includegraphics[width=0.49\columnwidth]{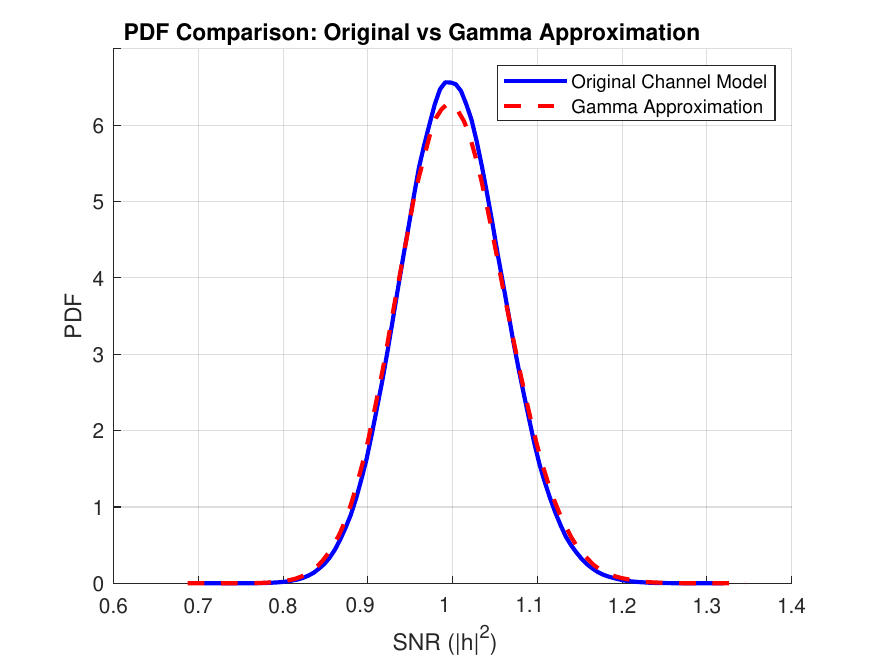}
	\hfill
	\includegraphics[width=0.49\columnwidth]{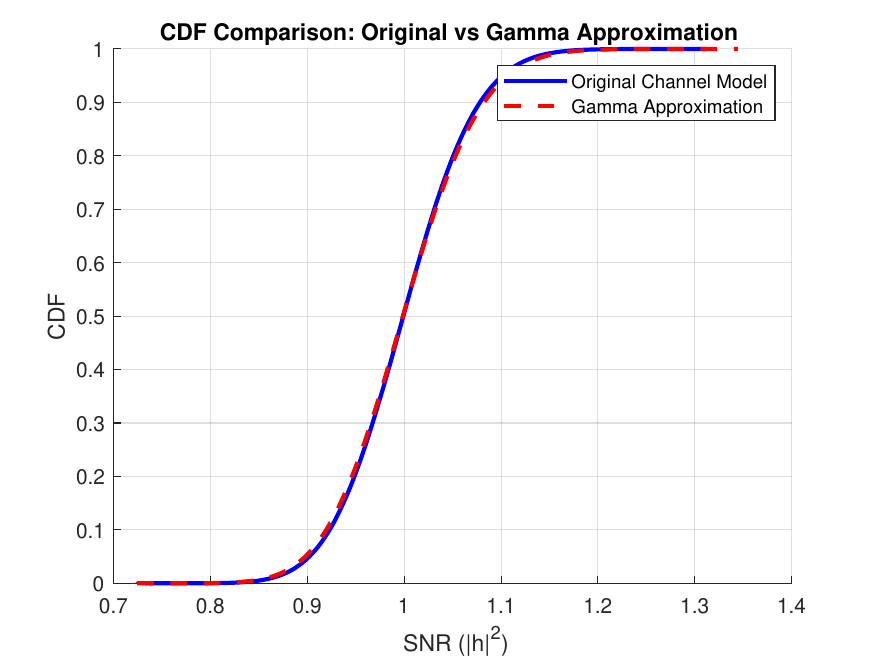}
	\vspace{-5mm}
	\caption{Simulation showing accuracy of Gamma approximation with the improved correlation model.}
	\vspace{-5mm}
	\label{fig:correlation}
\end{figure}



\vspace{-3mm}
\section{Derivation of exact outage expression}\label{appendix:exact_outage}

%
%
%
%

\subsection{Joint CDF Using the Correlation Matrix}

For correlated fading, the joint CDF is expressed using the determinant-based formulation {of}: %
\begin{align}\label{eqn:cdf_corr_matrix}
	P_{\text{out}}(\gamma_{\text{th}}) = \det \left( \mathbf{I} - \mathbf{R} \mathbf{F} \right),
\end{align}

\noindent where $\mathbf{R}$ is the spatial correlation matrix and $\mathbf{F}$ is a diagonal matrix with elements: $F_{|h_k|^2} (\gamma_{\text{th}}')$.

\vspace{-5mm}
\subsection{Laplace Expansion for Determinant-Based Outage}

The determinant can be expanded using Laplace expansion: %
\begin{align}\label{eqn:laplace}
	\det(\mathbf{A}) = \sum_{k=1}^{N} (-1)^{k+1} A_{1k} \det(\mathbf{A}_{1k}),
\end{align}

\noindent where $A_{1k}$ is the element in the first row and $k$-th column, {while} $\mathbf{A}_{1k}$ is the submatrix obtained by removing the first row and $k$-th column. Applying (\ref{eqn:laplace}) to (\ref{eqn:cdf_corr_matrix}), the final expression (\ref{eqn:outage_exact}) is obtained.





%
%

%
%


%
\vspace{-5mm}
\section{Outage Probability Bounds}\label{appendix:bounds}
\subsection{ Single-Port CDF \(F_S(\gamma)\)}
Assume $X_k = \lvert h_k \rvert^2$ follows a Gamma distribution with shape parameter $\alpha$ and scale parameter $\theta$. The CDF is: %
\begin{align}
F_S(\gamma) = \Pr\!\bigl(X_k < \gamma\bigr) = \frac{\gamma\bigl(\alpha, \tfrac{\gamma}{\theta}\bigr)}{\Gamma(\alpha)},
\end{align}
where $\gamma(\alpha, z)$ is the lower incomplete Gamma function and $\Gamma(\alpha)$ is the complete Gamma function.
	
\vspace{-5mm}
\subsection{Bounds on Outage Probability}

\subsubsection*{Lower Bound }

For fully correlated ports, the outage probability is the same as the single-port outage probability: %
\begin{align}
P_{\mathrm{out}}^{(\mathrm{lower})} = F_S(\gamma_{\mathrm{th}}) = \frac{\gamma\Bigl(\alpha,\;\tfrac{\gamma_{\mathrm{th}}}{\theta}\Bigr)}{\Gamma(\alpha)}.
\end{align}

\subsubsection*{Upper Bound (Fully Independent Ports)}

For fully independent ports, the outage probability is the \(N\)-th power of the single-port CDF: %
\vspace{-2mm}
\begin{align}
P_{\mathrm{out}}^{(\mathrm{upper})} = \bigl[F_S(\gamma_{\mathrm{th}})\bigr]^N = \left(
\frac{\gamma\Bigl(\alpha,\;\tfrac{\gamma_{\mathrm{th}}}{\theta}\Bigr)}{\Gamma(\alpha)}
\right)^N.
\end{align}
\vspace{-5mm}
\section{Exact Outage Probability of Single Path}\label{appendix:exact_outage_special}

By substituting the expressions for the pdf of $\lvert h_k \rvert$ into (\ref{eqn:outage_probability_def}), (\ref{eqn:exact_outage_v1}) is obtained. This can be further simplified to (\ref{eqn:exact_outage_v2}) using Fubini's theorem to swap the integration order. Finally, the inner integral can be further simplified using the Marcum-Q function to obtain (\ref{eqn:special_outage_probability_exact}).%
\begin{figure*}
	\vspace{-5mm}
    \begin{align}\label{eqn:exact_outage_v1}
		p(\gamma_{th}) =& P\!\left(\max_k |h_k| \leq \gamma_{\mathrm{th}} )\right) = \int^{\gamma_{th}}_0 \dots \int^{\gamma_{th}}_0 \frac{1}{2\pi} \times \\ \nonumber
		&\left[ \int^{\infty}_{\infty} \int^{\infty}_{\infty}	 \prod^N_{k=1} \left[ \frac{r}{\sigma^2}e^{- \frac{r^2 + |c_k|^2}{2\sigma^2}} \text{I}_0(\frac{r|c_k|}{\sigma^2})  \right]  \frac{1}{2\pi} e^{- \frac{x^2 + y^2}{2}} dx dy \right] dr_1 \dots dr_N.
	\end{align}
	\hrule
        \vspace{-4mm}
\end{figure*}
\vspace{-5mm}
\begin{figure*}
\vspace{-2mm}
\begin{align}\label{eqn:exact_outage_v2}
		p(\gamma_{th}) = \int^{\infty}_{\infty} \int^{\infty}_{\infty}	 \prod^N_{k=1} \left[ \int^{\gamma_{th}}_{0}  \frac{r}{\sigma^2}e^{- \frac{r^2 + |c_k|^2}{2\sigma^2}} \text{I}_0(\frac{r|c_k|}{\sigma^2}) dr_k \right]  \frac{1}{2\pi} e^{- \frac{x^2 + y^2}{2}} dx dy 
	\end{align}
	\hrule
    \vspace{-6mm}
\end{figure*}

\section{Applying the Gauss-Hermite Quadrature}\label{appendix:quadrature_outage}

For a Gaussian factor of $ e^{-t^2/2} $, the integration is adjusted in order to fit (\ref{eqn:gauss_hermite}) by using the substitution $ t = x / \sqrt{2} $: \vspace{-2mm}%
\begin{align}
	\int_{-\infty}^\infty e^{-x^2 / 2} g(x) \, \mathrm{d}x \approx \sqrt{2} \sum_{m=1}^M w_m g\left(\sqrt{2} x_m \right).
\end{align}

\vspace{-5mm}
\subsection{2D Integral for Outage Probability}

The outage probability integral can be rewritten as: %
\begin{align}
P_\text{out} = \frac{1}{2\pi} \iint_{-\infty}^\infty \exp\left(-\frac{x^2 + y^2}{2}\right) F(x, y) \, \mathrm{d}x \, \mathrm{d}y,
\end{align}
where $F(x, y) = \prod_{k=1}^N \left[1 - Q_1\left(\frac{|c_k(x, y)|}{\sigma}, \frac{\gamma_{\mathrm{th}}}{\sigma}\right)\right]$.

This can be rewritten s in terms of the Gauss-Hermite quadratures by substituting $ x = \sqrt{2} u $ and $ y = \sqrt{2} v $: %
\begin{align}
P_\text{out} = \frac{1}{\pi} \iint_{-\infty}^\infty e^{-(u^2 + v^2)} F(\sqrt{2} u, \sqrt{2} v) \, \mathrm{d}u \, \mathrm{d}v.
\end{align}

\vspace{-5mm}
\subsection*{Applying Guassian Quadrature to Outage Probability}

Using the product rule for 2D Gauss-Hermite quadratures, the double integral becomes: %
\begin{align}
&\iint_{-\infty}^\infty \exp\left[-(u^2 + v^2)\right] G(u, v) \, \mathrm{d}u \, \mathrm{d}v \\ \nonumber
&\approx \sum_{m=1}^M \sum_{n=1}^M w_m w_n G(x_m, x_n).
\end{align}

Substituting $G(u, v) = F(\sqrt{2} u, \sqrt{2} v) $ {yields}: %
\begin{align}
P(\text{outage}) \approx \frac{1}{\pi} \sum_{m=1}^M \sum_{n=1}^M w_m w_n F(\sqrt{2} x_m, \sqrt{2} x_n).
\end{align}

Expanding $F$, this becomes (\ref{eqn:quadrature_outage}).
\vspace{-5mm}
\section{Applying Gaussian Quadrature to Capacity}\label{appendix:quadrature_capacity}

The expression for capacity is approximated using the Gauss-Hermite quadrature for the outer integral $ \int_{-\infty}^\infty \int_{-\infty}^\infty$, and Gauss-Laguerre quadrature for the inner integral $ \int_0^\infty $.
\subsubsection*{ Reformulating the Outer Integral for Gauss-Hermite Quadrature}

The outer integral has a Gaussian weight $ \exp(-\frac{x^2 + y^2}{2}) $. Using the substitution $ x = \sqrt{2} u $, $ y = \sqrt{2} v $, and $ dx \, dy = 2 \, du \, dv $, the Gaussian weight becomes: \vspace{-3mm}%
\begin{align}
	\exp\left(-\frac{x^2 + y^2}{2}\right) = \exp\left(-u^2 - v^2\right).
\end{align}\vspace{-4mm}

Thus, the outer integral is rewritten as:\vspace{-2mm}%
\begin{align}
&\int_{-\infty}^\infty \int_{-\infty}^\infty \exp\left(-\frac{x^2 + y^2}{2}\right) G(x, y) \, dx \, dy \\ \nonumber 
&= 2 \int_{-\infty}^\infty \int_{-\infty}^\infty \exp\left(-u^2 - v^2\right) G(\sqrt{2} u, \sqrt{2} v) \, du \, dv.
\end{align}\vspace{-2mm}
This is suitable for the Gauss-Hermite quadrature:%
\begin{align}
&\int_{-\infty}^\infty \int_{-\infty}^\infty \exp\left(-u^2 - v^2\right) G(u, v) \, du \, dv \\ \nonumber 
&\approx \sum_{m=1}^M \sum_{n=1}^M w_m w_n G(\sqrt{2} x_m, \sqrt{2} x_n).
\end{align}\vspace{-2mm}

\subsubsection*{ Reformulating the Inner Integral for Gauss-Laguerre Quadrature}

The inner integral over $\gamma $ is:\vspace{-2mm}%
\begin{align}
	\int_0^\infty \frac{\prod_{k=1}^N Q_1\left( \frac{|c_k(x, y)|}{\sigma}, \frac{\sqrt{\gamma}}{\sigma} \right)}{1 + \gamma} \, d\gamma.
\end{align}\vspace{-5mm}

This can be reformulated with an exponential weight by {exploiting} $\frac{1}{1 + \gamma} = e^{-\ln(1 + \gamma)}$. Thus, the integral becomes:\vspace{-2mm}%
\begin{align}
\int_0^\infty \frac{\prod_{k=1}^N Q_1\left( \frac{|c_k(x, y)|}{\sigma}, \frac{\sqrt{\gamma}}{\sigma} \right)}{1 + \gamma} \, d\gamma = \int_0^\infty e^{-\gamma} H(\gamma) \, d\gamma,
\end{align}
where: \vspace{-2mm}%
\begin{align}
H(\gamma) = \prod_{k=1}^N Q_1\left( \frac{|c_k(x, y)|}{\sigma}, \frac{\sqrt{\gamma}}{\sigma} \right) e^{\gamma - \ln(1 + \gamma)}.
\end{align}\vspace{-2mm}

Using the Gauss-Laguerre quadrature {yields}:\vspace{-2mm}%
\begin{align}
\int_0^\infty e^{-\gamma} H(\gamma) \, d\gamma \approx \sum_{l=1}^L w_l H(t_l),
\end{align} \vspace{-2mm}

Combining the expressions for Gauss-Hermite and Gauss-Laguerre quadratures, (\ref{eqn:quadrature_capacity}) is obtained.


%
%

\bibliographystyle{IEEEtran}
\bibliography{reference}

\end{document}